\title{Non Abelian Bent Functions}
\author{Laurent Poinsot}
\institute{LIPN-UMR CNRS 7030, Institut Galil\'ee, Universit\'e Paris-Nord 
99, avenue Jean-Baptiste Cl\'ement  
93430 Villetaneuse, France, \\\email{laurent.poinsot@lipn.univ-paris13.fr}}
\begin{document}

\maketitle

\begin{abstract}
Perfect nonlinear functions from a finite group $G$ to another one $H$ are those functions $f: G \rightarrow H$ such that for all nonzero $\alpha \in G$, the derivative $d_{\alpha}f: x \mapsto f(\alpha x) f(x)^{-1}$ is balanced. In the case where both $G$ and $H$ are Abelian groups, $f: G \rightarrow H$ is perfect nonlinear if and only if $f$ is bent {\it i.e.} for all nonprincipal character $\chi$ of $H$, the (discrete) Fourier transform of $\chi \circ f$  has a constant magnitude equals to $|G|$. In this paper, using the theory of linear representations, we exhibit similar bentness-like characterizations in the cases where $G$ and/or $H$ are (finite) non Abelian groups. Thus we extend the concept of bent functions to the framework of non Abelian groups.
\end{abstract}
\keywords{Bent functions \and perfect nonlinearity \and finite non Abelian groups \and Fourier transform.}
\subclass{11T71 \and 20B05 \and 43A30}

\section{Introduction}
Let $G$ and $H$ be two finite groups (in multiplicative representation). Perfect nonlinear functions from $G$ to $H$ are those ideal functions $f: G \rightarrow H$ that match the less possible with the pattern of group homomorphism {\it i.e.} such that for all nonzero $\alpha \in G$ and for all $\beta \in H$,
\begin{equation}
|\{x \in G | f(\alpha x) f(x)^{-1} = \beta \}| = \displaystyle \frac{|G|}{|H|}.
\end{equation}
When $G$ and $H$ are both (finite-dimensional) vector spaces over ${\mathbb{Z}}_2 = \{0,1\}$, these functions, originally introduced by Nyberg \cite{Nyb92}, exhibit the maximal resistance against the differential attack \cite{Biha1}. Also in the Boolean case, this notion is known to be equivalent to bent functions: $f: {\mathbb{Z}}_2^m \rightarrow {\mathbb{Z}}_2^n$ is bent if for all $\alpha \in {\mathbb{Z}}_2^m$ and for all nonzero $\beta$ in ${\mathbb{Z}}_2^n$, 
\begin{equation}
|\widehat{(\chi^{\beta}_{{\mathbb{Z}}_2^n}  \circ f)}(\alpha)|^2 = 2^m
\end{equation}
where $\chi^{\beta}_{{\mathbb{Z}}_2^n}: {\mathbb{Z}}_2^n \rightarrow \{\pm 1\}$ is defined at $y$ by $(-1)^{\beta.y}$ (the point in exponent is the natural dot-product of ${\mathbb{Z}}_2^n$) and 
\begin{equation}
\widehat{\phi}(\alpha) = \displaystyle \sum_{x \in {\mathbb{Z}}_2^m}\phi(x)(-1)^{\alpha.x}
\end{equation}
is the Fourier transform of $\phi: G \rightarrow {\mathbb{C}}$ (this time $\alpha.x$ is the dot-product in ${\mathbb{Z}}_2^m$). These functions, independently introduced by Dillon \cite{Dill} and Rothaus \cite{Rot76}, exhibit the maximal resistance against the linear attack \cite{Mat94}.\\
The equivalence between bentness and perfect nonlinearity has been recently extended by Carlet and Ding \cite{CD04} and Pott \cite{Pot04} to the general case: $f: G \rightarrow H$ (where $G$ and $H$ are two finite Abelian groups) is perfect nonlinear if and only if for all $\alpha \in G$ and for all nonprincipal character $\chi$ of $H$, 
\begin{equation}
|\widehat{(\chi \circ f)}(\alpha)|^2 = |G|
\end{equation}
where for $\phi: G \rightarrow {\mathbb{C}}$, $\widehat{\phi}$ is its discrete Fourier transform.\\
In this paper, we exhibit the same kind of characterizations in the case where at least one of the two finite groups $G$ and $H$ is non Abelian. This gives a general equivalence between perfect nonlinearity and bentess.
\subsection*{Outline of the paper}
Next section contains the general notations used in the paper. In sect. \ref{sect2} are recalled some of the main results on the duality of finite groups. In particular, we present several kind of Fourier transforms used in the new characterizations of perfect nonlinearity in the non Abelian cases. The notion of perfect nonlinearity is exposed in the general framework of finite groups in sect. \ref{sect3}. Also in this section is given the dual characterization - {\it i.e.} the notion of bentness - of Carlet, Ding \cite{CD04} and Pott \cite{Pot04} of perfect nonlinear functions in the Abelian groups setting. Finally our own results of non Abelian bentness are developed in sect. \ref{sect4}.

\section{Notations}\label{sect1}
$|S|$ is the cardinality of any finite set $S$ and if $S$ is nonempty (possibly infinite), ${\mathit{Id}}_S$ denotes its identity map.\\
In this paper, the capital letters ``$G$'' and ``$H$'' always denote finite groups in multiplicative representation, $e_G$ is the neutral element and $G^*$ is defined as $G \setminus \{e_G\}$.\\
The vector spaces considered are always finite-dimensional complex vector spaces. For a (complex) vector space $V$, $0_V$ is its zero, $\dim_{{\mathbb{C}}}(V)$ is its dimension and ${\mathit{GL}}(V)$ denotes the {\it linear group} of $V$ which is a subset of the vector space of all endomorphisms of $V$, denoted ${\mathit{End}}(V)$. If $\lambda$ is any linear map, $\lambda^*$ denotes its adjoint and the {\it unitary group} of $V$ is ${\mathbb{U}}(V)$.\\

By convention, for each known result recalled in this paper, the proof has been intentionally omitted and a reference - not necessarily {\it the} original reference - is given. On the contrary, our own results are given obviously with their proofs and without any reference.

\section{On the duality of finite groups}\label{sect2}

\subsection{Introduction}
This paper is dedicated to the establishment of a dual characterization of the concept of perfect nonlinearity - similar to the one given by Carlet and Ding \cite{CD04} and Pott \cite{Pot04} concerning Abelian groups - in the non Abelian groups setting, using some harmonic analysis techniques. So in this section, we recall some basics about the duality of finite groups and the Fourier transform. Most of the definitions and results given in this section are well-known and can be found in any book on finite groups (\cite{pey04} for instance). 

\subsection{The Abelian case}

\subsubsection{The theory of characters}

\begin{definition}
\textup{Let $G$ be a finite group. A {\it character} $\chi$ of $G$ is a group homomorphism from $G$ to the multiplicative group ${\mathbb{C}}^*$. We denote by $\widehat{G}$ the set of all characters, called {\it dual} of $G$.}
\end{definition}
The set $\widehat{G}$ is actually a group under the point-wise multiplication of characters and its elements are valued in the group of the $|G|^{\mathit{th}}$ complex roots of the unity. In particular,
\begin{equation}
\forall \chi \in \widehat{G},\ \forall x \in G,\ |\chi(x)| = 1\ \mbox{and}\ \chi(x^{-1}) = \overline{\chi(x)}
\end{equation}
where $|z|$ is the complex-modulus and $\overline{z}$ is the conjugate of $z \in {\mathbb{C}}$.

\begin{proposition}\label{prop_G_Gchap_isom}(\cite{pey04})
Let $G$ be a finite Abelian group. Then $G$ and $\widehat{G}$ are isomorphic.
\end{proposition}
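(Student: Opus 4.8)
The plan is to reduce the whole statement to cyclic groups by means of the structure theorem for finite Abelian groups, and then to reassemble the pieces through the compatibility of duality with direct products. The isomorphism will not be built directly; instead it will emerge from combining two elementary computations.

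First I would treat the cyclic case $G = \mathbb{Z}_n$. Here a character $\chi$ is entirely determined by its value on a fixed generator: since that generator has order $n$, the value $\chi(1)$ must be an $n$-th root of unity, and conversely every such root yields a well-defined character. Hence there are exactly $n$ characters, and the one sending $1$ to $e^{2i\pi/n}$ generates $\widehat{\mathbb{Z}_n}$ as a cyclic group of order $n$, so that $\widehat{\mathbb{Z}_n} \cong \mathbb{Z}_n$. Next I would establish that duality sends direct products to direct products, namely $\widehat{G_1 \times G_2} \cong \widehat{G_1} \times \widehat{G_2}$. The map sending a pair $(\chi_1, \chi_2)$ to the character $(x_1, x_2) \mapsto \chi_1(x_1)\chi_2(x_2)$ is visibly a group homomorphism; its injectivity follows because each $\chi_i$ can be recovered by restricting to the $i$-th factor embedded via the neutral element on the other coordinate, and its surjectivity because an arbitrary character of the product factors as the product of these two restrictions.

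Finally I would invoke the structure theorem, which decomposes $G$ as a direct product $\mathbb{Z}_{n_1} \times \cdots \times \mathbb{Z}_{n_k}$ of cyclic groups. Iterating the product formula and substituting the cyclic computation then yields $\widehat{G} \cong \widehat{\mathbb{Z}_{n_1}} \times \cdots \times \widehat{\mathbb{Z}_{n_k}} \cong \mathbb{Z}_{n_1} \times \cdots \times \mathbb{Z}_{n_k} \cong G$, which is the desired conclusion.

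The main obstacle is really a question of what one is entitled to assume: the entire argument rests on the structure theorem for finite Abelian groups, which is the genuinely nontrivial ingredient, everything else being a routine verification. Beyond that, the only point deserving care is that the isomorphism so constructed is \emph{non-canonical}: it depends on the chosen cyclic decomposition and on the chosen generators. One should therefore not expect a natural isomorphism, but only the existence of some isomorphism, which is exactly what the statement asserts.
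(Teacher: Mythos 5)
Your proof is correct: the reduction to cyclic groups via the structure theorem, the computation $\widehat{\mathbb{Z}_n}\cong\mathbb{Z}_n$ from the value of a character on a generator, and the compatibility $\widehat{G_1\times G_2}\cong\widehat{G_1}\times\widehat{G_2}$ together constitute the standard argument, and you rightly note that the resulting isomorphism is non-canonical. The paper itself deliberately omits a proof of this proposition (it is a recalled known result, cited to \cite{pey04}), so there is nothing to compare against; your argument is exactly the textbook one.
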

In the remainder of this paper, when a finite Abelian group $G$ is considered, we always implicitly suppose that an isomorphism from $G$ to $G^*$ has been fixed and we use $\chi_G^{\alpha}$ to denote the image of $\alpha$ by such an isomorphism. In particular, $\forall x \in G$, $\chi^{e_G}_G (x) = 1$ (this character is called {\it trivial} or {\it principal}). Finally the characters satisfy the well-known orthogonality properties.
\begin{lemma}\label{lemme_somme_caractere}(\cite{pey04})
Let $G$ be a finite Abelian group. For all $\alpha \in G$ we have
\begin{equation}
\displaystyle \sum_{x \in G} \chi_G^{\alpha}(x) = \left\{
\begin{array}{l l}
0 & \mbox{if}\ \alpha \in G^*,\\
|G| & \mbox{if}\ \alpha = e_G.
\end{array}
\right.
\end{equation}
For all $x \in G$, we have
\begin{equation}
\displaystyle \sum_{\alpha \in G} \chi_G^{\alpha}(x) = \left\{
\begin{array}{l l}
0 & \mbox{if}\ x \in G^*,\\
|G| & \mbox{if}\ x = e_G.
\end{array}
\right.
\end{equation}
\end{lemma}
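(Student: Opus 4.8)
The plan is to prove the two orthogonality relations separately, handling the first sum by a direct translation argument and then deducing the second from the first applied to the dual group.

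For the first sum, fix $\alpha \in G$ and write $S = \sum_{x \in G}\chi_G^{\alpha}(x)$. If $\alpha = e_G$ the character is principal, so every term equals $1$ and $S = |G|$. If $\alpha \in G^*$, then $\chi_G^{\alpha}$ is nonprincipal — because $\alpha \mapsto \chi_G^{\alpha}$ is an isomorphism by Proposition \ref{prop_G_Gchap_isom} — hence there is some $x_0 \in G$ with $\chi_G^{\alpha}(x_0) \neq 1$. Using that left translation $x \mapsto x_0 x$ permutes $G$ and that $\chi_G^{\alpha}$ is a homomorphism, I would rewrite $S = \sum_{x \in G}\chi_G^{\alpha}(x_0 x) = \chi_G^{\alpha}(x_0)\, S$, so that $(1 - \chi_G^{\alpha}(x_0))S = 0$ and therefore $S = 0$. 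This is the classical reindexing trick and is entirely routine.

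For the second sum, fix $x \in G$ and set $T = \sum_{\alpha \in G}\chi_G^{\alpha}(x)$. If $x = e_G$ every character sends the identity to $1$, so $T = |G|$. For $x \in G^*$, I would introduce the evaluation map $\Phi_x : \widehat{G} \to {\mathbb{C}}^*$, $\Phi_x(\chi) = \chi(x)$, which is itself a character of the finite Abelian group $\widehat{G}$. Transporting the index $\alpha$ along the fixed isomorphism $\alpha \mapsto \chi_G^{\alpha}$, I rewrite $T = \sum_{\chi \in \widehat{G}}\Phi_x(\chi)$, which is exactly a sum of the type treated by the first relation, now applied to the group $\widehat{G}$ and the character $\Phi_x$. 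Provided $\Phi_x$ is nonprincipal, the first relation yields $T = 0$, while $|\widehat{G}| = |G|$ gives the value in the remaining case.

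The only genuine obstacle is to verify that $\Phi_x$ is nonprincipal whenever $x \neq e_G$; equivalently, that the characters separate the points of $G$, so that $\chi(x) = 1$ for all $\chi \in \widehat{G}$ forces $x = e_G$. This does not follow from the mere existence of an abstract isomorphism $G \cong \widehat{G}$, but rather from the canonical injection of $G$ into its bidual $\widehat{\widehat{G}}$ (Pontryagin duality for finite Abelian groups), or directly from the decomposition of $G$ into cyclic factors. I would establish this separation property first and then feed it into the argument above; everything else reduces to the same translation identity already used for the first sum.
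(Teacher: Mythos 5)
The paper deliberately gives no proof of this lemma --- it is quoted as a known result with a reference to \cite{pey04} --- so there is no internal argument to compare yours against. Your proof is the classical one and is sound: the translation trick $S=\chi_G^{\alpha}(x_0)S$ disposes of the first relation, and the second reduces to the first applied to the group $\widehat{G}$ via the evaluation character $\Phi_x$. You have correctly isolated the one genuinely delicate point, namely that $\Phi_x$ is nonprincipal for $x\neq e_G$ (characters separate points), and correctly noted that this does not follow from the abstract isomorphism $G\cong\widehat{G}$ alone; as written it remains a promissory note, but your proposed route via the cyclic decomposition is the standard way to discharge it (if $x$ has a nonzero component $x_i$ in a factor $\mathbb{Z}_{n_i}$, the character $y\mapsto e^{2\pi i y_i/n_i}$ does not fix $x$), so nothing essential is missing.
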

\subsubsection{The (discrete) Fourier transform}

\begin{definition}
\textup{Let $G$ be a finite Abelian group. The {\it Fourier transform} of $\phi: G \rightarrow {\mathbb{C}}$ is the map $\widehat{\phi}: G \rightarrow {\mathbb{C}}$ defined for $\alpha \in G$ by
\begin{equation}
\displaystyle \widehat{\phi}(\alpha) = \sum_{x \in G} \phi(x)\chi^{\alpha}_G (x).
\end{equation}}
\end{definition}
Some well-known and useful results are summarized below for $G$ a finite Abelian group.

\begin{proposition}\label{prop_TF_discrete}(\cite{pey04})
Let $\phi: G \rightarrow {\mathbb{C}}$.
\begin{enumerate}
\item We have the {\it inversion formula}
\begin{equation}
\phi = \displaystyle \frac{1}{|G|}\sum_{\alpha \in G}\widehat{\phi}(\alpha) \overline{\chi^{\alpha}_G};
\end{equation}
\item We have the {\it Parseval's equation}
\begin{equation}
\displaystyle \sum_{x \in G}|\phi(x)|^2 = \frac{1}{|G|}\sum_{\alpha \in G}|\widehat{\phi}(\alpha)|^2;
\end{equation}
\item $\phi (x) = 0$ $\forall x \in G^*$ if and only if $\widehat{\phi}$ is constant;
\item $\widehat{\phi}(\alpha) = 0$ $\forall \alpha \in G^*$ if and only if $\phi$ is constant.
\end{enumerate}
\end{proposition}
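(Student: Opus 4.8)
The plan is to derive all four statements from the orthogonality relations of Lemma~\ref{lemme_somme_caractere}, treating the inversion formula as the keystone from which parts (3) and (4) then follow almost immediately. The one algebraic fact that makes everything work is that each $\chi^{\alpha}_G$ is a genuine homomorphism into ${\mathbb{C}}^*$ whose values have unit modulus, so that for all $x,y \in G$ one has $\chi^{\alpha}_G(x)\overline{\chi^{\alpha}_G(y)} = \chi^{\alpha}_G(x)\chi^{\alpha}_G(y^{-1}) = \chi^{\alpha}_G(xy^{-1})$. I would record this identity at the outset, since it is what lets me replace products (and conjugates) of character values by a single character evaluation, and it is used in every part.

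For the inversion formula (1), I would start from the right-hand side, substitute the definition $\widehat{\phi}(\alpha) = \sum_{x \in G}\phi(x)\chi^{\alpha}_G(x)$, evaluate at an arbitrary $y \in G$, and interchange the two finite sums:
\begin{equation}
\frac{1}{|G|}\sum_{\alpha \in G}\widehat{\phi}(\alpha)\overline{\chi^{\alpha}_G(y)} = \frac{1}{|G|}\sum_{x \in G}\phi(x)\sum_{\alpha \in G}\chi^{\alpha}_G(xy^{-1}).
\end{equation}
The inner sum is exactly the second relation of Lemma~\ref{lemme_somme_caractere} applied to $z = xy^{-1}$, hence equals $|G|$ when $x = y$ and $0$ otherwise; only the term $x = y$ survives and the prefactor $|G|$ cancels, leaving $\phi(y)$.

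For Parseval's equation (2), I would expand $|\widehat{\phi}(\alpha)|^2 = \widehat{\phi}(\alpha)\overline{\widehat{\phi}(\alpha)}$ as a double sum over $x$ and $y$, push the summation over $\alpha$ inside, and again invoke the same identity to collapse $\sum_{\alpha}\chi^{\alpha}_G(xy^{-1})$ to $|G|$ on the diagonal $x=y$ and to $0$ off it; the surviving diagonal yields $\sum_{x \in G}|\phi(x)|^2$. (Alternatively this is the standard consequence of (1), but the direct computation is shorter.)

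Parts (3) and (4) are then dual bookkeeping exercises. For (3), if $\phi$ vanishes on $G^*$ then the defining sum for $\widehat{\phi}(\alpha)$ reduces to the single term $\phi(e_G)\chi^{\alpha}_G(e_G) = \phi(e_G)$, which is independent of $\alpha$; conversely, if $\widehat{\phi} \equiv c$ then substituting into (1) and using that $\sum_{\alpha}\overline{\chi^{\alpha}_G(y)} = \sum_{\alpha}\chi^{\alpha}_G(y^{-1})$ vanishes for $y \neq e_G$ forces $\phi$ to vanish on $G^*$. Part (4) is the mirror image, using instead the \emph{first} relation of Lemma~\ref{lemme_somme_caractere} together with $\chi^{e_G}_G \equiv 1$: constancy of $\phi$ makes $\widehat{\phi}(\alpha)$ proportional to $\sum_{x}\chi^{\alpha}_G(x)$, which is zero off $e_G$, and conversely the inversion formula collapses to its single term at $\alpha = e_G$. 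Since every step is a substitution followed by one appeal to orthogonality, there is no genuine obstacle here; the only thing to watch carefully is the interplay of conjugates and inverses, which the identity recorded in the first paragraph handles uniformly.
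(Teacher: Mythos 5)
Your proof is correct. The paper deliberately omits a proof of Proposition~\ref{prop_TF_discrete}, citing \cite{pey04} (and \cite{CD04} for the last two points) as a known result, so there is no in-paper argument to compare against; your derivation is the standard one, resting on the identity $\chi^{\alpha}_G(x)\overline{\chi^{\alpha}_G(y)} = \chi^{\alpha}_G(xy^{-1})$ and the two orthogonality relations of Lemma~\ref{lemme_somme_caractere}, with parts (3) and (4) correctly reduced to the inversion formula plus one appeal to orthogonality each.
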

Concerning the last two points, their proofs can be checked in \cite{CD04}.

\subsubsection{The multidimensional (discrete) Fourier transform}
In subsection B. of sect. \ref{sect4}, we consider some $V$-valued functions, where $V$ is a finite-dimensional vector space, defined on a finite Abelian group $G$ and we need to compute their ``Fourier transforms''. That is the reason why we now introduce a natural extension of the discrete Fourier transform, called {\it multidimensional Fourier transform}, to deal with $V$-valued functions rather than ${\mathbb{C}}$-valued ones. More details on this transform should be found in \cite{Poi05}.
\begin{definition}
\textup{Let $G$ be a finite Abelian group, $V$ a finite-dimensional vector space over ${\mathbb{C}}$ and $\phi: G \rightarrow {\mathbb{C}}$. The {\it multidimensional Fourier transform} of $\phi$ is defined as
\begin{equation}
\begin{array}{l l l l}
\widehat{\phi}^{\mathit{MD}}: & G & \rightarrow & V\\
& \alpha & \mapsto & \displaystyle \sum_{x \in G}\chi^{\alpha}_G (x) \phi(x).
\end{array}
\end{equation}}
\end{definition}
Now let suppose that $V$ is equipped with an inner-product (linear in the first variable and anti-linear in the second) denoted $\langle ., . \rangle_V$. Let fix $B$ an orthonormal basis of $V$. For each $e \in B$, we define the {\it component function} $\phi_e$ of $\phi: G \rightarrow V$ in direction $e$ as the map
\begin{equation}
\begin{array}{l l l l}
\phi_e: & G & \rightarrow & {\mathbb{C}}\\
& x & \mapsto & \langle \phi(x),e\rangle_V.
\end{array}
\end{equation}
According to the properties of orthonormal basis, we have $\forall x \in G$,
\begin{equation}
\phi (x) = \displaystyle \sum_{e \in B} \phi_e(x) e.
\end{equation}
We can easily check that the multidimensional Fourier transform is actually a component-wise discrete Fourier transform given by the following equation (for $\alpha \in G$)
\begin{equation}
\widehat{\phi}^{\mathit{MD}}(\alpha) = \displaystyle \sum_{e \in B} \widehat{\phi_e}(\alpha) e.
\end{equation}
Using this last equation, we can establish an {\it inversion formula} for the multidimensional Fourier transform. Let $x \in X$.
\begin{equation}\label{inv_form_multi_dim_FT}
\begin{array}{l l l}
\phi (x) &=& \displaystyle \sum_{e \in B} \phi_e (x) e\\
&=& \displaystyle \sum_{e \in E} \left( \frac{1}{|G|} \sum_{\alpha \in G} \widehat{\phi_e}(\alpha) \overline{\chi^{\alpha}_G}(x)\right)e\\
&&\mbox{(by the inversion formula applied on $\phi_e$)}\\
&=&\displaystyle \frac{1}{|G|} \sum_{\alpha \in G}\overline{\chi^{\alpha}_G (x)} \left( \sum_{e \in B} \widehat{\phi_e}(\alpha) e\right)\\
&=& \displaystyle \frac{1}{|G|} \sum_{\alpha \in G} \overline{\chi^{\alpha}_G (x)} \widehat{\phi}^{\mathit{MD}}(\alpha).
\end{array}
\end{equation}
Finally this transform satisfies a result similar to the third point of proposition \ref{prop_TF_discrete}.
\begin{proposition}\label{prop_TF_multidimensional}
Let $G$ be a finite Abelian group, $V$ a finite-dimensional vector space over ${\mathbb{C}}$ and $\phi: G \rightarrow V$. We have $\phi(x) = 0_{V}$ for all $x \in G^*$ if and only if $\widehat{\phi}^{\mathit{MD}}(\alpha) = \phi (e_G)$ for all $\alpha \in G$. 
\end{proposition}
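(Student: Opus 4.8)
The plan is to prove the two implications separately, treating the statement as the $V$-valued counterpart of the third item of Proposition~\ref{prop_TF_discrete}. The forward implication will follow at once from the definition of $\widehat{\phi}^{\mathit{MD}}$, while the converse will be obtained from the inversion formula~(\ref{inv_form_multi_dim_FT}) combined with the orthogonality relations of Lemma~\ref{lemme_somme_caractere}.

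First I would handle the implication from left to right. Assuming $\phi(x)=0_V$ for every $x\in G^*$, I would insert this into the defining sum $\widehat{\phi}^{\mathit{MD}}(\alpha)=\sum_{x\in G}\chi_G^{\alpha}(x)\phi(x)$; every term with $x\neq e_G$ vanishes, so only the term $\chi_G^{\alpha}(e_G)\phi(e_G)$ remains. Since a character takes the value $1$ at the neutral element, this yields $\widehat{\phi}^{\mathit{MD}}(\alpha)=\phi(e_G)$ for all $\alpha\in G$, as required.

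For the converse, I would start from the hypothesis $\widehat{\phi}^{\mathit{MD}}(\alpha)=\phi(e_G)$ for all $\alpha$ and substitute it into the inversion formula~(\ref{inv_form_multi_dim_FT}), obtaining $\phi(x)=\frac{1}{|G|}\phi(e_G)\sum_{\alpha\in G}\overline{\chi_G^{\alpha}(x)}$. The remaining sum is the complex conjugate of $\sum_{\alpha\in G}\chi_G^{\alpha}(x)$, which by the second identity of Lemma~\ref{lemme_somme_caractere} equals $0$ whenever $x\in G^*$ and $|G|$ when $x=e_G$. Conjugation leaves these real values unchanged, so for every $x\in G^*$ the factor in front of $\phi(e_G)$ is zero and thus $\phi(x)=0_V$.

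The argument is essentially a direct computation, so I do not expect a genuine obstacle; the only point requiring a little care is that the constant value of $\widehat{\phi}^{\mathit{MD}}$ must be pinned down as exactly $\phi(e_G)$ rather than merely asserting constancy. This is forced in the forward direction by $\chi_G^{\alpha}(e_G)=1$ and in the backward direction by the precise numerical values furnished by the orthogonality lemma. A slightly longer alternative would reduce everything to the scalar statement of Proposition~\ref{prop_TF_discrete} by passing to the component functions $\phi_e$, but that route still requires an extra computation to identify the constant, so the direct approach above is preferable.
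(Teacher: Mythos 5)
Your proposal is correct and follows essentially the same route as the paper: the forward direction is read off from the definition of $\widehat{\phi}^{\mathit{MD}}$, and the converse substitutes the constant value $\phi(e_G)$ into the inversion formula~(\ref{inv_form_multi_dim_FT}) and invokes the orthogonality relations of Lemma~\ref{lemme_somme_caractere}. No gaps to report.
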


\begin{proof}
\begin{itemize}
\item [$\Rightarrow$)] Let suppose that for all $x \in G^*$, $\phi(x) = 0_{V}$. Then we have $\forall \alpha \in G$, $\widehat{\phi}^{\mathit{MD}}(\alpha) = \displaystyle \sum_{x \in G} \chi^{\alpha}_G (x) \phi(x) = \phi(e_G)$. 
\item [$\Leftarrow$)] Let suppose that $\widehat{\phi}^{\mathit{MD}}(\alpha) = \phi (e_G)$ for all $\alpha \in G$. Using the inversion formula (\ref{inv_form_multi_dim_FT}), we get that for $x \in G$, $\phi(x) = \displaystyle \frac{1}{|G|}\left(\sum_{\alpha \in G} \overline{\chi^{\alpha}_G (x)}\right) \phi(e_G) = \left \{
\begin{array}{l l}
0_{V} & \mbox{if}\ x \in G^*,\\
\phi(e_G) & \mbox{if}\ x = e_G,
\end{array}\right .$ according to the orthogonality relations (lemma \ref{lemme_somme_caractere}). 
\end{itemize}
\qed
\end{proof}

\subsection{The non Abelian case}

\subsubsection{The theory of linear representations}

\begin{definition}
\textup{Let $V$ be a finite-dimensional complex vector space. A {\it linear representation} of a finite group $G$ on $V$ is a group homomorphism from $G$ to ${\mathit{GL}}(V)$.}
\end{definition}
For each linear representation $\rho: G \rightarrow {\mathit{GL}}(V)$, it is possible to find a basis of $V$ in which for all $x \in G$, $\rho(x)$ is a unitary operator of $V$ {\it i.e.} $\rho: G \rightarrow {\mathbb{U}}(V)$. Indeed, we can check that for a linear representation $\rho$ of $G$ on $V$, for each $x \in G$, $\rho(x)$ leaves invariant the following inner-product in $V$
\begin{equation}\label{inner_product_inv}
\langle u,v\rangle_{G,\rho,V} = \displaystyle \sum_{x \in G} \langle \rho(x)(u),\rho(x)(v)\rangle_V
\end{equation}
where $(u,v) \in V^2$ and $\langle .,.\rangle_V$ denotes any inner-product of $V$ (linear in the first variable and anti-linear in the second). Then in the remainder, without loss of generality, we only consider unitary representations.\\

The linear representations of $G$ on ${\mathbb{C}}$ can be identified with the characters of $G$ since ${\mathbb{C}}$ and ${\mathit{GL}}({\mathbb{C}})$ are isomorphic. Actually if $G$ is a finite Abelian group then the notion of linear representation gives nothing new because it is equivalent to the notion of character.

\begin{definition}
\textup{A linear representation $\rho$ of a finite group $G$ on $V$ is said to be {\it irreducible} if there is no subspace $W \subset V$, other than $\{0_V\}$ and $V$, such that $\forall x \in G$, $\forall w \in W$, $\rho(x)(w) \in W$.}
\end{definition}
\begin{definition}
\textup{Two linear representations $\rho$ and $\rho'$ of a finite group $G$ on respectively $V$ and $V'$ are {\it isomorphic} if it exists a linear isomorphism $\Phi: V \rightarrow V'$ such that for all $x \in G$,
\begin{equation}\label{equivariance}
\Phi \circ \rho (x) = \rho' (x) \circ \Phi.
\end{equation}}
(A linear map that satisfies equality~(\ref{equivariance}) is called \emph{equivariant map} and is easily seen to be a morphism from $V$ to $V'$ seen as left $G$-modules.)
\end{definition}
The notion of isomorphism is an equivalence relation for linear representations\footnote{Even if the collection of all linear representations of a given finite group does not form a set but rather a proper class, it is an easy exercice to check that the collection of isomorphism classes of linear representations really does form a set.}.
\begin{definition}
\textup{For a finite group $G$, the {\it dual} of $G$, denoted $\widetilde{G}$, is a set that contains exactly one and only one representative of each equivalence class of isomorphic irreducible representations of $G$.} 
\end{definition}
By definition, if $(\rho,\rho') \in \widetilde{G}^2$, then $\rho$ and $\rho'$ are nonisomorphic irreducible representations of $G$. In the remainder, the notation
\begin{equation}
\rho = \rho_V \in \widetilde{G}
\end{equation}
means that $\rho: G \rightarrow {\mathbb{U}}(V)$ is an irreducible representation of $G$.\\
If $G$ is a finite Abelian group, then $\widetilde{G}$ is equal to $\widehat{G}$ (up to an isomorphism from ${\mathit{GL}}({\mathbb{C}})$ to ${\mathbb{C}}$). If $G$ is a finite non Abelian group, the two notions of duality become distinct (in particular, $\widetilde{G}$ is not a group). By abuse of notation, $\widetilde{G}^*$ is defined as the set $\widetilde{G} \setminus \{\rho_0\}$ where $\rho_0$ is the {\it trivial} or {\it principal} representation of $G$ {\it i.e.} $\forall x \in G$, $\rho_0(x) = {\mathit{Id}}_{{\mathbb{C}}}$.\\

When dealing with linear representations, a major result, know as {\it Schur's lemma}, should be kept in mind.
\begin{lemma}\label{schur_lemma}(\cite{pey04})
Let $G$ be a finite group. Let $\rho = \rho_V \in \widetilde{G}$ and $\lambda \in {\mathit{End}}(V)$. If $\forall x \in G$, $\lambda \circ \rho(x) = \rho(x) \circ \lambda$ then $\lambda$ is a multiple of the identity {\it i.e.} it exists $k \in {\mathbb{C}}$ such that $\lambda = k {\mathit{Id}}_V$.
\end{lemma}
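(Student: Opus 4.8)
The plan is to combine the algebraic closure of ${\mathbb{C}}$ with the defining property of irreducibility. Since $V$ is a nonzero finite-dimensional complex vector space, the characteristic polynomial of $\lambda \in {\mathit{End}}(V)$ splits over ${\mathbb{C}}$, so $\lambda$ admits at least one eigenvalue $k \in {\mathbb{C}}$. First I would fix such a $k$ and introduce $\mu = \lambda - k\,{\mathit{Id}}_V \in {\mathit{End}}(V)$, thereby reducing the whole problem to proving that $\mu = 0$.

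The next step is to transfer the commutation hypothesis from $\lambda$ to $\mu$: because ${\mathit{Id}}_V$ commutes with every $\rho(x)$, the assumption $\lambda \circ \rho(x) = \rho(x) \circ \lambda$ immediately yields $\mu \circ \rho(x) = \rho(x) \circ \mu$ for all $x \in G$. I would then examine the eigenspace $W = \ker \mu$, which is a nonzero subspace of $V$ precisely because $k$ was chosen to be an eigenvalue. The crucial observation is that $W$ is stable under the representation: for any $w \in W$ and any $x \in G$, the commutation relation gives $\mu(\rho(x)(w)) = \rho(x)(\mu(w)) = \rho(x)(0_V) = 0_V$, so $\rho(x)(w) \in W$.

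To conclude, I would invoke the irreducibility of $\rho = \rho_V \in \widetilde{G}$: the $G$-invariant subspace $W$ is different from $\{0_V\}$, hence it must equal $V$. Therefore $\mu$ vanishes on all of $V$, that is $\lambda = k\,{\mathit{Id}}_V$, which is exactly the claim. I expect the only genuine obstacle to be the appeal to algebraic closure when producing the eigenvalue $k$: this existence step is precisely where the hypothesis that $V$ is a \emph{complex} (rather than, say, real) vector space is indispensable, and the statement fails without it. The remaining verifications — that $\mu$ still commutes with $\rho$, and that an eigenspace of a commuting endomorphism is a subrepresentation — are entirely routine.
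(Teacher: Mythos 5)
Your argument is correct and complete: extracting an eigenvalue $k$ of $\lambda$ via the algebraic closure of ${\mathbb{C}}$, checking that $\ker(\lambda - k\,{\mathit{Id}}_V)$ is a nonzero $\rho$-invariant subspace, and concluding $\ker(\lambda - k\,{\mathit{Id}}_V) = V$ by irreducibility is the classical proof of this half of Schur's lemma. Note that the paper itself gives no proof here --- it cites this as a known result from the literature --- so there is nothing to compare against; your write-up supplies exactly the standard argument, and you correctly identify the existence of the eigenvalue as the one step where the complex (algebraically closed) setting is essential.
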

As direct consequences of the Schur's lemma, we can state the two following results that will be use in the sequel.
\begin{lemma}\label{petit_theorem_important_sur_les_caracteres}(\cite{pey04})
Let $G$ be a finite group. For $x \in G^*$, 
\begin{equation}
\displaystyle \sum_{\rho_V \in \widetilde{G}} \dim_{{\mathbb{C}}}(V) {\mathit{tr}}(\rho_V (x)) = 0
\end{equation}
where ``${\mathit{tr}}$'' denotes the usual trace of endomorphisms.
\end{lemma}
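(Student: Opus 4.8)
The plan is to recognize the left-hand side as the character of the (left) regular representation of $G$ evaluated at $x$, and then to compute that character in two different ways.

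First I would introduce the regular representation. Let $\mathbb{C}[G]$ denote the $|G|$-dimensional complex vector space with basis $\{\mathbf{e}_g\}_{g \in G}$, and define $\rho_{\mathrm{reg}}: G \rightarrow {\mathit{GL}}(\mathbb{C}[G])$ by $\rho_{\mathrm{reg}}(x)(\mathbf{e}_g) = \mathbf{e}_{xg}$; this is readily checked to be a linear representation. In the basis $\{\mathbf{e}_g\}$ each operator $\rho_{\mathrm{reg}}(x)$ is a permutation matrix, so its trace equals the number of $g \in G$ with $xg = g$. Hence ${\mathit{tr}}(\rho_{\mathrm{reg}}(x)) = |G|$ when $x = e_G$, and ${\mathit{tr}}(\rho_{\mathrm{reg}}(x)) = 0$ for every $x \in G^*$.

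Second I would decompose $\rho_{\mathrm{reg}}$ into irreducible representations. The central claim is that each $\rho_V \in \widetilde{G}$ occurs in this decomposition with multiplicity exactly $\dim_{\mathbb{C}}(V)$, so that $\rho_{\mathrm{reg}}$ is isomorphic to the direct sum in which each $\rho_V$ appears $\dim_{\mathbb{C}}(V)$ times. Taking traces through this isomorphism gives, for every $x \in G$, the identity ${\mathit{tr}}(\rho_{\mathrm{reg}}(x)) = \sum_{\rho_V \in \widetilde{G}} \dim_{\mathbb{C}}(V)\,{\mathit{tr}}(\rho_V(x))$. Combining this with the trace computation of the first step and specializing to $x \in G^*$ yields precisely the stated equation, since the left-hand side then vanishes.

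The genuine obstacle is establishing the multiplicity claim of the second step. I would obtain it from the orthogonality relations for irreducible characters, which are the standard consequence of Schur's lemma (Lemma \ref{schur_lemma}): the multiplicity of $\rho_V$ in $\rho_{\mathrm{reg}}$ equals the Hermitian inner product $\frac{1}{|G|}\sum_{x \in G} {\mathit{tr}}(\rho_{\mathrm{reg}}(x))\,\overline{{\mathit{tr}}(\rho_V(x))}$, and by the trace computation of the first step only the term $x = e_G$ survives, giving $\overline{{\mathit{tr}}(\rho_V(e_G))} = \dim_{\mathbb{C}}(V)$. The real work thus lies in deriving character orthogonality from Schur's lemma, via averaging an arbitrary linear map into an intertwining operator and applying the lemma; once that machinery is in place, the result follows immediately.
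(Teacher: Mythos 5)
Your argument is correct: it is the standard proof via the character of the left regular representation, whose value is $|G|$ at $e_G$ and $0$ on $G^*$, combined with the decomposition of that representation in which each $\rho_V \in \widetilde{G}$ occurs with multiplicity $\dim_{\mathbb{C}}(V)$. Note, however, that the paper deliberately gives no proof of this lemma --- it is stated as a known result with a reference to \cite{pey04} and merely described as a ``direct consequence of the Schur's lemma'' --- so there is nothing in the paper to compare your route against; your reduction of the multiplicity claim to character orthogonality, itself derived from Lemma \ref{schur_lemma} by the averaging trick, is exactly the textbook chain of reasoning the citation points to, and you are right that this orthogonality step is where the substantive work lies.
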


\begin{lemma}\label{lemma_star}
Let $G$ be a finite group. Let $\rho = \rho_V \in \widetilde{G}^*$. Then
\begin{equation}
\displaystyle \sum_{x \in G}\rho(x) = 0_{{\mathit{End}}(V)}.
\end{equation}
\end{lemma}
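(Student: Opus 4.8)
The plan is to apply Schur's lemma (Lemma~\ref{schur_lemma}) to the operator $S := \sum_{x \in G} \rho(x) \in {\mathit{End}}(V)$ and then to show that the resulting scalar must vanish precisely because $\rho$ is nonprincipal.

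First I would exploit the invariance of the summation under left translation by a fixed group element. For any $g \in G$, since $\rho$ is a homomorphism, $\rho(g) S = \sum_{x \in G} \rho(g)\rho(x) = \sum_{x \in G}\rho(gx)$, and as $x$ runs over $G$ so does $gx$; hence $\rho(g) S = S$. Symmetrically, summing on the right gives $S \rho(g) = S$. Combining the two identities yields $\rho(g) S = S \rho(g)$ for every $g \in G$, so $S$ commutes with the entire representation.

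Since $\rho = \rho_V$ is irreducible, Schur's lemma then furnishes a scalar $k \in {\mathbb{C}}$ with $S = k\, {\mathit{Id}}_V$. To pin down $k$ I would feed this back into the relation $\rho(g) S = S$: it becomes $k\, \rho(g) = k\, {\mathit{Id}}_V$, i.e. $k\bigl(\rho(g) - {\mathit{Id}}_V\bigr) = 0_{{\mathit{End}}(V)}$ for all $g$. Because $\rho \in \widetilde{G}^*$ is nonprincipal, there exists at least one $g \in G$ with $\rho(g) \neq {\mathit{Id}}_V$ (otherwise $\rho$ would coincide with $\rho_0$); for such a $g$ the endomorphism $\rho(g) - {\mathit{Id}}_V$ is nonzero, which forces $k = 0$ and therefore $S = 0_{{\mathit{End}}(V)}$.

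The only delicate point is this last step, the determination of $k$: the hypothesis of nonprincipality must be used in an essential way, since for the trivial representation $\rho_0$ the analogous sum equals $|G|\,{\mathit{Id}}_{{\mathbb{C}}} \neq 0$. (Alternatively, $k$ could be extracted by taking traces, writing $k\,\dim_{{\mathbb{C}}}(V) = {\mathit{tr}}(S) = \sum_{x\in G}{\mathit{tr}}(\rho(x))$ and invoking the orthogonality of characters; but the commutation argument above avoids any appeal to character sums and keeps the proof self-contained given Schur's lemma.)
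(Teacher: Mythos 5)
Your proposal is correct and follows essentially the same route as the paper's proof: reindex the sum to get $\rho(g)S = S = S\rho(g)$, invoke Schur's lemma to write $S = k\,{\mathit{Id}}_V$, and use nonprincipality to force $k=0$ (the paper phrases this last step as a contradiction from $\lambda \neq 0$, but the argument is identical). No gaps.
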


\begin{proof}
Let $\lambda \in {\mathit{End}}(V)$ defined as $\lambda = \displaystyle \sum_{x \in G} \rho(x)$. Let $x_0 \in G$. We have
\begin{equation*}
\lambda = \displaystyle \sum_{x \in G}\rho(x) = \sum_{x \in G} \rho(x_0 x) = \rho(x_0) \circ \sum_{x \in G} \rho(x) = \rho(x_0) \circ \lambda
\end{equation*}
but also\\
$\lambda = \displaystyle \sum_{x \in G}\rho(x) = \sum_{x \in G} \rho(x x_0) = \left(\sum_{x \in G} \rho(x)\right) \circ \rho(x_0) = \lambda \circ \rho(x_0).$\\
In particular $\lambda \circ \rho(x_0) = \rho(x_0) \circ \lambda$. As it is true for any $x_0 \in G$, $\lambda$ commutes with all $\rho(x)$. By the Schur's lemma, $\lambda$ is a multiple on the identity: it exists $k \in {\mathbb{C}}$ such that $\lambda = k {\mathit{Id}}_{V}$. Now let suppose $\lambda \not = 0_{{\mathit{End}}(V)}$, then $k \in {\mathbb{C}}^*$. Using the first part of the proof, we know that $\lambda = \rho(x) \circ \lambda$ (for each $x \in G$). Then $({\mathit{Id}}_V - \rho(x)) \circ \lambda = 0_{{\mathit{End}}(V)}$. As $\lambda = k {\mathit{Id}}_V$, we have $({\mathit{Id}}_V - \rho(x)) \circ (k {\mathit{Id}}_V) = 0_{{\mathit{End}}(V)}$. Since $k \not = 0$, we have ${\mathit{Id}}_V - \rho(x) = 0_{{\mathit{End}}(V)}$ or also $\rho(x) = {\mathit{Id}}_V$ which is a contradiction with the assumption that $\rho$ is non trivial. \qed
\end{proof}

\subsubsection{The representation-based Fourier transform}
By substituting irreducible linear representations to characters, it is possible to define a kind of Fourier transform for non Abelian groups.\\
Let $G$ be any finite group. 
\begin{definition}
\textup{Let $\phi: G \rightarrow {\mathbb{C}}$. The {\it (representation-based) Fourier transform} of $\phi$ is defined for $\rho = \rho_V \in \widetilde{G}$ as
\begin{equation}
\widetilde{\phi}(\rho) = \displaystyle \sum_{x \in G} \phi(x) \rho(x) \in {\mathit{End}}(V).
\end{equation}}
\end{definition}
This notion is a generalization of the classical discrete Fourier transform. This transform is invertible so we have also an {\it inversion formula}.
\begin{proposition}\label{prop_somme_de_trace_nulle}(\cite{pey04})
Let $\phi: G \rightarrow {\mathbb{C}}$. Then for all $x \in G$ we have,
\begin{equation}
\phi(x) = \displaystyle \frac{1}{|G|}\sum_{\rho_V \in \widetilde{G}} \dim_{{\mathbb{C}}}(V) {\mathit{tr}}(\rho_V (x^{-1}) \circ \widetilde{\phi}(\rho_V)).
\end{equation}
\end{proposition}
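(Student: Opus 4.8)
The plan is to verify the formula directly by substituting the definition of the representation-based Fourier transform into the right-hand side and then exchanging the order of the two summations. Writing $\widetilde{\phi}(\rho_V) = \sum_{y \in G} \phi(y)\rho_V(y)$ and using that the map $\lambda \mapsto {\mathit{tr}}(\rho_V(x^{-1}) \circ \lambda)$ is linear in $\lambda$, one may pull the scalars $\phi(y)$ and the finite sum over $y$ outside the trace, so that the right-hand side becomes
\begin{equation*}
\frac{1}{|G|}\sum_{y \in G}\phi(y)\sum_{\rho_V \in \widetilde{G}}\dim_{\mathbb{C}}(V)\,{\mathit{tr}}\bigl(\rho_V(x^{-1})\circ\rho_V(y)\bigr).
\end{equation*}

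First I would use that each $\rho_V$ is a group homomorphism, so $\rho_V(x^{-1})\circ\rho_V(y) = \rho_V(x^{-1}y)$, turning the inner sum into $\sum_{\rho_V}\dim_{\mathbb{C}}(V)\,{\mathit{tr}}(\rho_V(x^{-1}y))$. The whole argument then reduces to evaluating this inner sum as a function of the single group element $z = x^{-1}y$, and splitting into the cases $z = e_G$ and $z \in G^*$. For $z \in G^*$, Lemma~\ref{petit_theorem_important_sur_les_caracteres} gives immediately that the inner sum vanishes. For $z = e_G$ we have $\rho_V(e_G) = {\mathit{Id}}_V$, hence ${\mathit{tr}}(\rho_V(e_G)) = \dim_{\mathbb{C}}(V)$, and the inner sum equals $\sum_{\rho_V \in \widetilde{G}}(\dim_{\mathbb{C}}(V))^2$.

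Thus the inner sum behaves like a Kronecker delta in $y$, concentrated at $y = x$ and scaled by $\sum_{\rho_V}(\dim_{\mathbb{C}}(V))^2$; substituting back collapses the outer sum over $y$ to the single surviving term $y = x$, leaving $\phi(x)$ multiplied by $\frac{1}{|G|}\sum_{\rho_V}(\dim_{\mathbb{C}}(V))^2$. The identity therefore holds precisely when this normalizing factor equals $1$.

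I expect the main obstacle to be exactly this last point: establishing that $\sum_{\rho_V \in \widetilde{G}}(\dim_{\mathbb{C}}(V))^2 = |G|$, i.e.\ that the sum of the squares of the dimensions of the irreducible representations equals the order of the group. This is a standard consequence of the decomposition of the regular representation (equivalently, of the orthogonality relations for matrix coefficients), but it is not among the facts recalled earlier in the excerpt, so I would either invoke it as a cited structural result or derive it from the completeness of the irreducible matrix coefficients. Once this normalization is secured, everything else is formal bookkeeping: linearity of the trace, the homomorphism property $\rho_V(x^{-1}y)=\rho_V(x^{-1})\circ\rho_V(y)$, and the case split driven by Lemma~\ref{petit_theorem_important_sur_les_caracteres}.
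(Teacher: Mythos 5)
Your argument is correct, and it is the standard proof of the Fourier inversion formula for finite groups. Note that the paper itself offers no proof to compare against: by the convention stated in Section~\ref{sect1}, Proposition~\ref{prop_somme_de_trace_nulle} is recalled as a known result and simply attributed to \cite{pey04}. Your computation --- expand $\widetilde{\phi}(\rho_V)$, use linearity of the trace and the homomorphism property to reduce the right-hand side to $\frac{1}{|G|}\sum_{y}\phi(y)\sum_{\rho_V}\dim_{\mathbb{C}}(V)\,{\mathit{tr}}(\rho_V(x^{-1}y))$, then kill all terms with $y \neq x$ via Lemma~\ref{petit_theorem_important_sur_les_caracteres} --- is exactly right, and you have correctly isolated the one ingredient the paper does not supply: the normalization $\sum_{\rho_V \in \widetilde{G}}(\dim_{\mathbb{C}}(V))^2 = |G|$. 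That identity is indeed not a consequence of the lemmas recalled in the paper (Lemma~\ref{petit_theorem_important_sur_les_caracteres} only treats $x \in G^*$, i.e.\ it gives the value of the regular character off the identity, not at the identity); it is the standard count coming from the decomposition of the regular representation into $\dim_{\mathbb{C}}(V)$ copies of each irreducible $V$, and citing it is entirely appropriate. In fact, the pair of facts you use --- vanishing of $\sum_{\rho_V}\dim_{\mathbb{C}}(V)\,{\mathit{tr}}(\rho_V(z))$ for $z \in G^*$ and its value $|G|$ at $z = e_G$ --- together just say that this sum is the character of the regular representation, which is the cleanest way to package the whole proof. No gap beyond that properly flagged external input.
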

A last technical lemma is given below.
\begin{lemma}\label{lemma_diese}
Let $\phi: G \rightarrow {\mathbb{C}}$. We have
\begin{enumerate}
\item $\phi(x) = 0$ $\forall x \in G^*$ if and only if $\forall \rho = \rho_V \in \widetilde{G}$, $\widetilde{\phi}(\rho) = \phi(e_G) {\mathit{Id}}_V$;
\item $\widetilde{\phi}(\rho) = 0_{{\mathit{End}}(V)}$ $\forall \rho = \rho_V \in \widetilde{G}^*$ if and only if $\phi$ is constant.
\end{enumerate}
\end{lemma}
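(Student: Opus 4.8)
The plan is to treat this as the representation-based analogue of points 3 and 4 of Proposition~\ref{prop_TF_discrete}, exploiting the exact parallel between the two statements. In each of the four implications, the argument reduces to one of two mechanisms: either I expand $\widetilde{\phi}(\rho) = \sum_{x \in G}\phi(x)\rho(x)$ directly from the definition and collapse the sum using a summation identity for representations, or I invoke the inversion formula of Proposition~\ref{prop_somme_de_trace_nulle} and then kill the unwanted terms with the same kind of identity. Concretely, the two summation identities I will need are Lemma~\ref{petit_theorem_important_sur_les_caracteres} (the ``column'' orthogonality $\sum_{\rho_V}\dim_{\mathbb{C}}(V)\,\mathit{tr}(\rho_V(x))=0$ for $x\in G^*$) and Lemma~\ref{lemma_star} ($\sum_{x\in G}\rho(x)=0_{\mathit{End}(V)}$ for nontrivial $\rho$), and it is a matter of matching each direction to the correct one.

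For the first point, the forward direction is immediate: if $\phi$ vanishes on $G^*$, then in $\widetilde{\phi}(\rho)=\sum_{x\in G}\phi(x)\rho(x)$ only the term $x=e_G$ survives, and since $\rho$ is a homomorphism $\rho(e_G)=\mathit{Id}_V$, giving $\widetilde{\phi}(\rho)=\phi(e_G)\,\mathit{Id}_V$. For the converse I would feed the hypothesis $\widetilde{\phi}(\rho_V)=\phi(e_G)\,\mathit{Id}_V$ into the inversion formula and pull the scalar out, obtaining $\phi(x)=\frac{\phi(e_G)}{|G|}\sum_{\rho_V\in\widetilde{G}}\dim_{\mathbb{C}}(V)\,\mathit{tr}(\rho_V(x^{-1}))$; for $x\in G^*$ we have $x^{-1}\in G^*$, so Lemma~\ref{petit_theorem_important_sur_les_caracteres} applied at $x^{-1}$ forces the inner sum to zero and hence $\phi(x)=0$.

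For the second point, the ``if'' direction again follows from the definition: if $\phi\equiv c$ is constant, then $\widetilde{\phi}(\rho)=c\sum_{x\in G}\rho(x)$, which is $0_{\mathit{End}(V)}$ for every $\rho\in\widetilde{G}^*$ by Lemma~\ref{lemma_star}. For the ``only if'' direction I would again use the inversion formula, but now split the sum over $\widetilde{G}$ into the principal representation $\rho_0$ and the rest $\widetilde{G}^*$. On $\widetilde{G}^*$ the hypothesis $\widetilde{\phi}(\rho)=0_{\mathit{End}(V)}$ annihilates every term, so only the $\rho_0$ contribution remains; since $\rho_0$ acts on $\mathbb{C}$ with $\dim_{\mathbb{C}}(\mathbb{C})=1$ and $\rho_0(x^{-1})=\mathit{Id}_{\mathbb{C}}$, that contribution equals $\frac{1}{|G|}\,\mathit{tr}\big(\widetilde{\phi}(\rho_0)\big)=\frac{1}{|G|}\sum_{y\in G}\phi(y)$, which does not depend on $x$. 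Hence $\phi$ is constant.

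The only genuinely delicate step is the isolation of the principal representation in this last implication: one must notice that the $\widetilde{G}^*$ part of the inversion sum vanishes by hypothesis while the surviving $\rho_0$ term, a one-dimensional object, reduces the trace to the plain sum $\sum_{y}\phi(y)$ that is manifestly independent of $x$. Everything else is bookkeeping, and the main conceptual effort is simply selecting, for each of the four implications, whether the definition-plus-Lemma~\ref{lemma_star} route or the inversion-plus-Lemma~\ref{petit_theorem_important_sur_les_caracteres} route is the appropriate one.
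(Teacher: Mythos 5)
Your proposal is correct and follows essentially the same route as the paper: definition plus Lemma~\ref{lemma_star} for the two ``easy'' directions, and the inversion formula combined with Lemma~\ref{petit_theorem_important_sur_les_caracteres} (resp.\ isolation of the principal representation) for the two converses. The only cosmetic difference is in the converse of point 1, where you apply Lemma~\ref{petit_theorem_important_sur_les_caracteres} directly at $x^{-1}\in G^*$ while the paper first rewrites $\mathit{tr}(\rho_V(x^{-1}))$ as $\overline{\mathit{tr}(\rho_V(x))}$ via unitarity; both are valid.
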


\begin{proof}

\begin{enumerate}
\item \begin{itemize}
\item[$\Rightarrow$)] For $\rho = \rho_V \in \widetilde{G}$, we have
\begin{equation*}
\begin{array}{l l l}
\widetilde{\phi}(\rho) & = & \displaystyle \sum_{x \in G}\phi(x)\rho (x)\ \mbox{(by definition)}\\
& = & \phi(e_G)\rho(e_G)\ \mbox{(by assumption on $\phi$)}\\
& = & \phi(e_G){\mathit{Id}}_{V}\\
&&\mbox{(since $\rho$ is a group homomorphism)}.
\end{array}
\end{equation*}
\item[$\Leftarrow$)] For $x \in G$, the inversion formula gives
\begin{equation*}
\begin{array}{l}
\phi(x)  = \displaystyle \frac{1}{|G|}\sum_{\rho_V \in \widetilde{G}}\dim_{{\mathbb{C}}}(V) {\mathit{tr}}(\rho_V (x^{-1}) \circ \widetilde{\phi} (\rho_V))\\
 = \displaystyle \frac{1}{|G|}\sum_{\rho_V \in \widetilde{G}}\dim_{{\mathbb{C}}}(V) {\mathit{tr}}(\rho_V (x^{-1}) \circ \phi(e_G){\mathit{Id}}_{V})\\
\mbox{(by hypothesis)}\\
 = \displaystyle \frac{\phi(e_G)}{|G|}\sum_{\rho_V \in \widetilde{G}}\dim_{{\mathbb{C}}}(V) {\mathit{tr}}(\rho_V (x^{-1}))\\
 = \displaystyle \frac{\phi(e_G)}{|G|}\sum_{\rho_V \in \widetilde{G}}\dim_{{\mathbb{C}}}(V) {\mathit{tr}}(\rho_V (x)^{-1})\\
\mbox{(since $\rho_V$ is a group homomorphism)}\\
 = \displaystyle \frac{\phi(e_G)}{|G|}\sum_{\rho_V \in \widetilde{G}}\dim_{{\mathbb{C}}}(V) {\mathit{tr}}(\rho_V (x)^{*})\\
\mbox{(since $\rho_V(x)$ is unitary)}\\
 = \displaystyle \frac{\phi(e_G)}{|G|}\sum_{\rho_V \in \widetilde{G}}\dim_{{\mathbb{C}}}(V) \overline{{\mathit{tr}}(\rho_V (x))}\\
 = \displaystyle \frac{\phi(e_G)}{|G|}\overline{\sum_{\rho_V \in \widetilde{G}}\dim_{{\mathbb{C}}}(V) {\mathit{tr}}(\rho_V (x))}\\
 = 0\ \mbox{if $x \not = e_G$ (according to lemma \ref{petit_theorem_important_sur_les_caracteres})}.
\end{array}
\end{equation*}
\end{itemize}
\item \begin{itemize}
\item[$\Rightarrow$)] By the inversion formula, $\forall x \in G$,
\begin{equation}
\begin{array}{l}
\phi(x) =\displaystyle \frac{1}{|G|}\sum_{\rho_V \in \widetilde{G}} \dim_{{\mathbb{C}}}(V) {\mathit{tr}}(\rho_V (x^{-1}) \circ \widetilde{\phi}(\rho_V))\\
= \displaystyle \frac{1}{|G|}{\mathit{tr}}(\widetilde{\phi} ({\mathit{Id}}_{{\mathbb{C}}}))\ \mbox{(by hypothesis)}.
\end{array}
\end{equation}

\item[$\Leftarrow$)] Let $\rho_V \in \widetilde{G}$, we have $\widetilde{\phi}(\rho_V) = k \displaystyle \sum_{x \in G}\rho_V (x)$ (with $\phi(x) = k\ \forall x \in G$). According to lemma \ref{lemma_star}, we deduce that $\widetilde{\phi}(\rho_V) = 0_{{\mathit{End}}(V)}$ for all $\rho_V \in \widetilde{G}^*$.
\end{itemize}
\end{enumerate} \qed
\end{proof}

\section{On perfect nonlinear functions}\label{sect3}

\subsection{Some basic definitions}
Perfect nonlinearity must be seen as the fundamental notion on which our results are based. Actually our ambition in this paper is to describe this combinatorial concept in terms of Fourier transforms. Thus it is necessary to briefly present this topic.
\begin{definition}
\textup{Let $X$ and $Y$ be two finite nonempty sets. A function $f: X \rightarrow Y$ is said to be {\it balanced} if the function 
\begin{equation}
\begin{array}{l l l l}
\phi_f:& Y  & \rightarrow & \mathbb{N}\\
& y & \mapsto & |\{x \in X | f(x) = y\}|
\end{array}
\end{equation}
is constant equal to $\displaystyle \frac{|X|}{|Y|}$.}
\end{definition}

\begin{definition}
\textup{Let $G$ and $H$ be two finite groups and $f: G \rightarrow H$. The {\it left derivative} of $f$ in direction $\alpha \in G$ is defined as the map
\begin{equation}
\begin{array}{l l l l}
d^{(l)}_{\alpha}f: & G & \rightarrow & H\\
& x & \mapsto & f(\alpha x)f(x)^{-1}.
\end{array}
\end{equation}
Symmetrically, the {\it right derivative} of $f$ in direction $\alpha \in G$ is the map
\begin{equation}
\begin{array}{l l l l}
d_{\alpha}^{(r)}f: & G & \rightarrow & H\\
& x & \mapsto & f(x)^{-1} f(x \alpha).
\end{array}
\end{equation}}
\end{definition}
The left-translation actions of both $G$ and $H$ are each equivalent to right-translation actions of $G$ and $H$. Then it is easy to see that each result concerning right-translation action is symmetric to a result on left-translation action. So in this paper, we only focus on the left version and in the remainder we forgot the noun ${\it left}$: the derivative of $f$ in direction $\alpha \in G$, will be simply denoted by $d_{\alpha}f$ and will be the same as $d_{\alpha}^{(l)}f$.

\begin{definition}
\textup{Let $G$ and $H$ be two finite groups and $f: G \rightarrow H$. The map $f$ is said to be {\it perfect nonlinear} if for each $\alpha \in G^*$, $d_{\alpha}f$ is balanced {\it i.e.} for each $(\alpha,\beta) \in G^* \times H$,
\begin{equation}
|\{x \in G | f(\alpha x)f(x)^{-1} = \beta\}| = \displaystyle \frac{|G|}{|H|}.
\end{equation}}
\end{definition}
When $G$ and $H$ are ${\mathbb{Z}}_2^m$ and ${\mathbb{Z}}_2^n$, perfect nonlinearity is a very relevant cryptographic property because it ensures the maximal resistance against the so-called differential cryptanalysis of Biham and Shamir \cite{Biha1}. When $|G| = |H|$, these functions are also known as {\it planar functions} in finite geometry.

\subsection{Bent functions in finite Abelian groups}

When considering the case of finite Abelian groups, it is possible to characterize the notion of perfect nonlinearity using the (discrete) Fourier transform; this {\it dual characterization} leading to an equivalent notion of bent functions. This work has been done recently and independently by Carlet and Ding \cite{CD04} and Pott \cite{Pot04}. This subsection is then devoted to the presentation of these results.\\

For the remainder of this subsection, we suppose given a pair $(G,H)$ of finite {\bf Abelian} groups. The main result obtained by the three authors is essentially based on the following lemma.

\begin{lemma}{\label{equilibre_TF}}(\cite{CD04})
Let $X$ be a finite nonempty set and $f: X \rightarrow H$. The map $f$ is balanced if and only if, for each $\beta \in H^*$, we have
\begin{equation}
\displaystyle \sum_{x \in X} (\chi^{\beta}_H \circ f)(x) = 0.
\end{equation}
\end{lemma}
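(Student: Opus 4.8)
The plan is to characterize balancedness of $f: X \rightarrow H$ through the vanishing of the sums $\sum_{x \in X}(\chi^{\beta}_H \circ f)(x)$ for nonprincipal characters $\beta$, exploiting the orthogonality relations of Lemma \ref{lemme_somme_caractere}. The natural object to study is the counting function $\phi_f(\beta') = |\{x \in X \mid f(x) = \beta'\}|$ from the definition of balancedness; by definition $f$ is balanced precisely when $\phi_f$ is the constant $|X|/|H|$. The key idea is to rewrite the sum $\sum_{x \in X}(\chi^{\beta}_H \circ f)(x)$ by grouping the elements $x \in X$ according to the value $f(x)$, which converts a sum over $X$ into a sum over $H$ weighted by $\phi_f$.

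First I would observe the elementary regrouping identity
\begin{equation*}
\sum_{x \in X}(\chi^{\beta}_H \circ f)(x) = \sum_{\beta' \in H}\phi_f(\beta')\,\chi^{\beta}_H(\beta'),
\end{equation*}
which holds for every $\beta \in H$ since each $x$ contributes $\chi^{\beta}_H(f(x))$ and there are exactly $\phi_f(\beta')$ values of $x$ with $f(x) = \beta'$. The right-hand side is, up to the present indexing, the discrete Fourier transform $\widehat{\phi_f}(\beta)$ of the function $\phi_f: H \rightarrow \mathbb{C}$. Thus the stated condition ``$\sum_{x \in X}(\chi^{\beta}_H \circ f)(x) = 0$ for all $\beta \in H^*$'' is exactly the condition that $\widehat{\phi_f}(\beta) = 0$ for all nonprincipal $\beta$.

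Next I would invoke point (4) of Proposition \ref{prop_TF_discrete}, applied to the function $\phi_f$ on the finite Abelian group $H$: the transform $\widehat{\phi_f}$ vanishes on $H^*$ if and only if $\phi_f$ is constant. Since $f$ is balanced exactly when $\phi_f$ is constant (equal to $|X|/|H|$, its forced average value since $\sum_{\beta' \in H}\phi_f(\beta') = |X|$), this immediately yields the desired equivalence. One can also argue the two directions by hand directly from Lemma \ref{lemme_somme_caractere} without citing Proposition \ref{prop_TF_discrete}: for the forward direction, if $\phi_f \equiv |X|/|H|$ then $\sum_{\beta' \in H}\phi_f(\beta')\chi^{\beta}_H(\beta') = \frac{|X|}{|H|}\sum_{\beta' \in H}\chi^{\beta}_H(\beta')$, which vanishes for $\beta \in H^*$ by the first orthogonality relation; for the converse, one expands $\phi_f$ via the inversion formula and uses that all nonprincipal transform values are zero.

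The only mild subtlety, and the step I would watch most carefully, is the bookkeeping in the regrouping identity and the matching of the present character indexing $\chi^{\beta}_H$ with the Fourier transform convention of the Definition preceding Proposition \ref{prop_TF_discrete}; everything else is a direct application of the orthogonality relations. There is no genuine obstacle here, as the result is purely a consequence of character orthogonality on the Abelian group $H$.
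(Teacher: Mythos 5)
Your proof is correct and follows exactly the route the paper uses: the paper omits a proof of this lemma because it is quoted from \cite{CD04}, but its proof of the direct generalization (Lemma \ref{f_pnl_TF_nulle_cas_representations_de_groupe}) proceeds identically, regrouping the sum by the fibers of $f$ to obtain $\widehat{\phi_f}(\beta)$ and then invoking the fourth point of Proposition \ref{prop_TF_discrete} together with the inversion formula to conclude that $\phi_f$ is the constant $|X|/|H|$. No gaps.
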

In particular, if $X$ is a (finite Abelian) group $G$, the previous lemma can be re-written as follows: $f: G \rightarrow H$ is balanced if and only if for each $\beta \in H^*$, $\widehat{(\chi_H^{\beta} \circ f)}(e_G) = 0$. This technical result gives a link between balancedness and the Fourier transform which is used to prove the main result given below.
\begin{theorem}\label{bent_4_Gab_Hab}(\cite{CD04})
Let $f: G \rightarrow H$. The map $f$ is perfect nonlinear if and only if for each $\beta \in H^*$, we have $\forall \alpha \in G$,
\begin{equation}\label{bent_Abelian}
\displaystyle |\widehat{(\chi^{\beta}_H \circ f)}(\alpha)| = \sqrt{|G|}.
\end{equation}
\end{theorem}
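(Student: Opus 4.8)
The plan is to reduce perfect nonlinearity of $f$ to the vanishing of the autocorrelations of the complex-valued functions $g_\beta := \chi^\beta_H \circ f$ (for $\beta \in H^*$), and then to relate these autocorrelations to the squared Fourier magnitudes $|\widehat{g_\beta}(\alpha)|^2$ through a Wiener--Khinchin--type identity.

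First I would unwind the definitions. By definition $f$ is perfect nonlinear iff $d_\alpha f$ is balanced for every $\alpha \in G^*$; applying Lemma~\ref{equilibre_TF} with $X = G$ to each derivative, this holds iff $\sum_{x \in G}(\chi^\beta_H \circ d_\alpha f)(x) = 0$ for all $\alpha \in G^*$ and all $\beta \in H^*$. Since each $\chi^\beta_H$ is a character and $\chi^\beta_H(y^{-1}) = \overline{\chi^\beta_H(y)}$, the summand rewrites as $(\chi^\beta_H \circ d_\alpha f)(x) = \chi^\beta_H(f(\alpha x))\,\overline{\chi^\beta_H(f(x))} = g_\beta(\alpha x)\,\overline{g_\beta(x)}$. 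Hence $f$ is perfect nonlinear iff the autocorrelation $C_\beta(\alpha) := \sum_{x \in G} g_\beta(\alpha x)\,\overline{g_\beta(x)}$ vanishes for every $\alpha \in G^*$ and every $\beta \in H^*$.

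The technical core, and the step I expect to be the main obstacle, is the identity $\widehat{C_\beta}(\alpha) = |\widehat{g_\beta}(\alpha)|^2$ valid for every $\alpha \in G$. To establish it I would expand $|\widehat{g_\beta}(\alpha)|^2 = \widehat{g_\beta}(\alpha)\,\overline{\widehat{g_\beta}(\alpha)}$ as a double sum over $(x,y) \in G^2$, rewrite $\overline{\chi^\alpha_G(y)} = \chi^\alpha_G(y^{-1})$ and collect the characters into $\chi^\alpha_G(xy^{-1})$ by multiplicativity, and then substitute $z = xy^{-1}$ (so $x = zy$). For fixed $z$ the sum over $y$ rebuilds $C_\beta(z)$, leaving precisely $\sum_{z \in G} C_\beta(z)\,\chi^\alpha_G(z) = \widehat{C_\beta}(\alpha)$.

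Finally I would close the equivalence. A direct computation gives $C_\beta(e_G) = \sum_{x \in G}|g_\beta(x)|^2 = |G|$, because $|\chi^\beta_H(f(x))| = 1$ for every $x$. Applying the third point of Proposition~\ref{prop_TF_discrete} to $\phi = C_\beta$ shows that $C_\beta(\alpha) = 0$ for all $\alpha \in G^*$ is equivalent to $\widehat{C_\beta}$ being constant; and evaluating the inversion formula at $e_G$, together with the orthogonality relations of Lemma~\ref{lemme_somme_caractere}, pins this constant down to $C_\beta(e_G) = |G|$. Through the identity of the previous paragraph this constancy reads exactly as $|\widehat{g_\beta}(\alpha)|^2 = |G|$ for all $\alpha \in G$, that is, $|\widehat{(\chi^\beta_H \circ f)}(\alpha)| = \sqrt{|G|}$. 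Letting $\beta$ range over $H^*$ yields the theorem.
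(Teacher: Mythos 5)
Your proof is correct. The paper itself omits the proof of this theorem (it is recalled from \cite{CD04}), but your autocorrelation/Wiener--Khinchin route --- reduce perfect nonlinearity to the vanishing of $C_\beta$ on $G^*$ via Lemma~\ref{equilibre_TF}, prove $\widehat{C_\beta}(\alpha)=|\widehat{(\chi^\beta_H\circ f)}(\alpha)|^2$, and close with point~3 of Proposition~\ref{prop_TF_discrete} and $C_\beta(e_G)=|G|$ --- is exactly the argument the paper generalizes in Section~\ref{sect4} (compare Proposition~\ref{autocorel_G_nab_H_ab} and the theorem following it), so this is essentially the same approach.
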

When $G$ and $H$ are ${\mathbb{Z}}_2^m$ and ${\mathbb{Z}}_2^n$, a function $f: {\mathbb{Z}}_2^m \rightarrow {\mathbb{Z}}_2^n$ that satisfies the equalities (\ref{bent_Abelian}) is called a {\it boolean bent function}. As perfect nonlinearity, this notion is very important in cryptography because it characterizes the boolean functions that exhibit the best resistance against the linear cryptanalysis of Matsui \cite{Mat94}. By analogy with the boolean case, we will say that a function $f: G \rightarrow H$ that satisfies (\ref{bent_Abelian}) is an {\it (Abelian) bent function}. The theorem above means that Abelian bentness is strictly equivalent to perfect nonlinearity. In the remainder of this paper, we establish the same kind of dual characterization in the cases where $G$ and/or $H$ can be non Abelian.
 
\section{Bent functions in finite non Abelian groups}\label{sect4}

\subsection{Case where $G$ is non Abelian and $H$ is Abelian}
In this subsection, $G$ is a finite {\bf non Abelian} group and $H$ is a finite Abelian group. We first generalize lemma \ref{equilibre_TF} in this context where a non Abelian group occurs.
\begin{lemma}\label{f_pnl_TF_nulle_cas_representations_de_groupe}
Let $f: G \rightarrow H$ and $\rho_0 \in \tilde{G}$ the principal irreducible representation of $G$. The map $f$ is balanced if and only if for each $\beta \in H^*$,
\begin{equation}
\widetilde{(\chi^{\beta}_H \circ f)}(\rho_0) = 0_{{\mathit{End}}({\mathbb{C}})}.
\end{equation}
\end{lemma}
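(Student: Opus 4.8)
The plan is to reduce this non-Abelian-domain statement to the Abelian lemma (Lemma \ref{equilibre_TF}) by observing that the principal representation $\rho_0$ essentially computes a plain sum over $G$. First I would unfold the definition of the representation-based Fourier transform at $\rho_0$. Since $\rho_0(x) = {\mathit{Id}}_{{\mathbb{C}}}$ for every $x \in G$, we have
\begin{equation*}
\widetilde{(\chi^{\beta}_H \circ f)}(\rho_0) = \sum_{x \in G} (\chi^{\beta}_H \circ f)(x)\, \rho_0(x) = \left(\sum_{x \in G} (\chi^{\beta}_H \circ f)(x)\right) {\mathit{Id}}_{{\mathbb{C}}} \in {\mathit{End}}({\mathbb{C}}).
\end{equation*}
Thus the endomorphism $\widetilde{(\chi^{\beta}_H \circ f)}(\rho_0)$ equals $0_{{\mathit{End}}({\mathbb{C}})}$ if and only if the scalar $\sum_{x \in G} (\chi^{\beta}_H \circ f)(x)$ vanishes.

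Next I would invoke Lemma \ref{equilibre_TF} directly. That lemma, applied with the finite nonempty set $X = G$, states that $f: G \rightarrow H$ is balanced if and only if for each $\beta \in H^*$ the sum $\sum_{x \in G} (\chi^{\beta}_H \circ f)(x)$ is zero. Note that Lemma \ref{equilibre_TF} makes no assumption that $X$ carry any group structure, so the fact that $G$ is non Abelian is irrelevant to its applicability; the non-Abelian nature of $G$ enters only through the reinterpretation of the scalar sum as the principal-representation Fourier coefficient. Combining this equivalence with the scalar identification from the first paragraph completes both directions simultaneously.

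There is no genuine obstacle here: the result is essentially a translation lemma whose only content is recognizing that evaluating the representation-based Fourier transform at the one-dimensional trivial representation recovers the ordinary summation functional. The one point requiring mild care is the identification of $\rho_0$ as a map into ${\mathit{GL}}({\mathbb{C}}) \cong {\mathbb{C}}^*$ versus ${\mathit{End}}({\mathbb{C}}) \cong {\mathbb{C}}$, so that ``$0_{{\mathit{End}}({\mathbb{C}})}$'' is correctly read as the zero scalar and the equivalence with Lemma \ref{equilibre_TF} is literal. Once that bookkeeping is in place, the proof is a two-line chain of equivalences, and I expect the author's proof to proceed exactly along these lines.
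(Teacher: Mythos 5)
Your proposal is correct, and the key first step --- identifying $\widetilde{(\chi^{\beta}_H \circ f)}(\rho_0)$ with the scalar $\sum_{x \in G}(\chi^{\beta}_H \circ f)(x)$ times ${\mathit{Id}}_{{\mathbb{C}}}$ because $\rho_0(x) = {\mathit{Id}}_{{\mathbb{C}}}$ --- is exactly the computation the paper opens with. Where you diverge is in how you then establish the equivalence between balancedness and the vanishing of these character sums: you cite Lemma \ref{equilibre_TF} as a black box, correctly observing that it only requires $X$ to be a finite nonempty set (so the group structure of $G$, Abelian or not, is irrelevant to its applicability) and that $H$ is still Abelian here. The paper instead re-derives that equivalence inline: it rewrites the sum over $G$ as $\sum_{\gamma \in H}\phi_f(\gamma)\chi^{\beta}_H(\gamma) = \widehat{\phi_f}(\beta)$, where $\phi_f(\gamma) = |\{x \in G \mid f(x)=\gamma\}|$, then uses the orthogonality relations (Lemma \ref{lemme_somme_caractere}) for the forward direction and the fourth point of Proposition \ref{prop_TF_discrete} plus the inversion formula for the converse. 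Your route is shorter and makes the logical dependence on the Carlet--Ding lemma explicit; the paper's route is self-contained and, as a by-product, exhibits $\widetilde{(\chi^{\beta}_H \circ f)}(\rho_0)$ as the discrete Fourier transform $\widehat{\phi_f}(\beta)$ of the fiber-counting function, an identification it reuses in the analogous Lemma \ref{lemmaGAb_HnAb}. Both arguments are complete and valid.
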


\begin{proof}
First let compute the representation-based Fourier transform of the function $\chi^{\beta}_H \circ f: G \rightarrow {\mathbb{C}}$ at $\rho_0$.
\begin{equation}\label{TF_de_chi_beta_circ_f_cas_representations_lineaires}
\begin{array}{l l l}
\widetilde{(\chi_H^{\beta} \circ f)}(\rho_0) & = & \displaystyle \sum_{x \in G}\chi^{\beta}_H (f(x)) \rho_0 (x)\\
& = & \displaystyle \sum_{x \in G}\chi^{\beta}_H (f(x)) {\mathit{Id}}_{{\mathbb{C}}}\\
& = & \displaystyle \sum_{\gamma \in H}\phi_f (\gamma) \chi^{\beta}_H (\gamma) {\mathit{Id}}_{{\mathbb{C}}}\\
&=& \widehat{\phi_f}(\beta) {\mathit{Id}}_{{\mathbb{C}}}
\end{array}
\end{equation}
where we recall that $\phi_f (\gamma)$ is defined as $|\{x \in G | f(x) = \gamma\}|$.
\begin{itemize}
\item [$\Rightarrow$)] Let $\beta \in H^*$ and suppose that $f$ is balanced. Then $\forall \gamma \in H$, $\phi_f (\gamma) = \displaystyle \frac{|G|}{|H|}$ (by definition of balancedness). According to (\ref{TF_de_chi_beta_circ_f_cas_representations_lineaires}), we find $\displaystyle \widetilde{(\chi_H^{\beta} \circ f)}(\rho_0) = \frac{|G|}{|H|}\sum_{\gamma \in H}\chi^{\beta}_H (\gamma) {\mathit{Id}}_{{\mathbb{C}}}$. Since for each $\beta \in H^*$, we have $\displaystyle \sum_{\gamma \in H}\chi^{\beta}_{H}(\gamma) = 0$ (by lemma \ref{lemme_somme_caractere}), we have $\displaystyle \widetilde{(\chi^{\beta}_H \circ f)}(\rho_0) = 0_{{\mathit{End}}({\mathbb{C}})}$.
\item [$\Leftarrow$)] Let suppose that for each $\beta \in H^*$, $\displaystyle \widetilde{(\chi^{\beta}_H \circ f)}(\rho_0) = 0_{{\mathit{End}}({\mathbb{C}})}$. According to (\ref{TF_de_chi_beta_circ_f_cas_representations_lineaires}), we have for each $\beta \in H^*$, $\widehat{\phi_f}(\beta) {\mathit{Id}}_{{\mathbb{C}}} = 0_{{\mathit{End}}({\mathbb{C}})}$ and then for each $\beta \in H^*$, $\widehat{\phi_f}(\beta) = 0$. The fourth point of proposition \ref{prop_TF_discrete} implies then that $\phi_f$ is constant. Then using the inversion formula, we obtain that for all $\beta \in H$, $\phi_f (\beta) = \displaystyle \frac{1}{|H|}\widehat{\phi_f}(e_H) = \frac{1}{|H|}\sum_{\gamma \in H} \phi_f (\gamma) = \frac{|G|}{|H|}$ (by definition of $\phi_f$). Then $f$ is balanced.
\end{itemize}\qed
\end{proof}
As in the Abelian case, the previous lemma is fundamental for the dual characterization of perfect nonlinearity. Nevertheless before using it, we need an intermediary result.
\begin{proposition}\label{autocorel_G_nab_H_ab}
Let $f: G \rightarrow H$ and $\beta \in H$. We define the {\it autocorrelation function} of $f$ by 
\begin{equation*}
\begin{array}{l l l l}
{\mathit{AC}}_{f,\beta}: & G & \rightarrow & {\mathbb{C}}\\
& \alpha & \mapsto &  (\widetilde{(\chi^{\beta}_H \circ d_{\alpha}f)} (\rho_0))(1).
\end{array}
\end{equation*}
Then for all $\rho = \rho_V \in \widetilde{G}$, 
\begin{equation}
\widetilde{{\mathit{AC}}_{f,\beta}}(\rho) = (\widetilde{(\chi^{\beta}_H \circ f)} (\rho))\circ (\widetilde{(\chi^{\beta}_H \circ f)}(\rho))^{*}.
\end{equation}
\end{proposition}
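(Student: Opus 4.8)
The plan is to reduce the autocorrelation to an explicit scalar sum, then compute its representation-based Fourier transform by a change of variable and factor the resulting double sum, exploiting the unitarity of $\rho$ to recognize the adjoint.

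First I would unravel the definition of ${\mathit{AC}}_{f,\beta}$. Since $\rho_0(x) = {\mathit{Id}}_{{\mathbb{C}}}$ for every $x$, the transform $\widetilde{(\chi^{\beta}_H \circ d_{\alpha}f)}(\rho_0)$ is just the scalar operator on ${\mathbb{C}}$ given by multiplication by $\sum_{x \in G} \chi^{\beta}_H(d_{\alpha}f(x))$, so evaluating at $1$ yields
\[
{\mathit{AC}}_{f,\beta}(\alpha) = \sum_{x \in G} \chi^{\beta}_H\!\left(f(\alpha x) f(x)^{-1}\right).
\]
Because $H$ is Abelian and $\chi^{\beta}_H$ is a character, it is multiplicative and satisfies $\chi^{\beta}_H(y^{-1}) = \overline{\chi^{\beta}_H(y)}$. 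Writing $g = \chi^{\beta}_H \circ f$ for brevity, this puts the autocorrelation into its familiar shape
\[
{\mathit{AC}}_{f,\beta}(\alpha) = \sum_{x \in G} g(\alpha x)\,\overline{g(x)}.
\]

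Next I would insert this into the representation-based Fourier transform at $\rho = \rho_V$ and perform the substitution $y = \alpha x$ (hence $\alpha = y x^{-1}$), which decouples the two summations:
\[
\widetilde{{\mathit{AC}}_{f,\beta}}(\rho) = \sum_{\alpha \in G} {\mathit{AC}}_{f,\beta}(\alpha)\,\rho(\alpha) = \sum_{x \in G}\sum_{y \in G} g(y)\,\overline{g(x)}\,\rho(y x^{-1}).
\]
Since $\rho$ is a homomorphism, $\rho(y x^{-1}) = \rho(y) \circ \rho(x)^{-1}$, and since $\rho$ is unitary, $\rho(x)^{-1} = \rho(x)^{*}$. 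The scalars commute with everything, so by bilinearity of composition the double sum factors, keeping the $y$-terms on the left and the $x$-terms on the right, as
\[
\left(\sum_{y \in G} g(y)\,\rho(y)\right) \circ \left(\sum_{x \in G} \overline{g(x)}\,\rho(x)^{*}\right).
\]
The left factor is exactly $\widetilde{(\chi^{\beta}_H \circ f)}(\rho)$, and by conjugate-linearity of the adjoint the right factor equals $\bigl(\widetilde{(\chi^{\beta}_H \circ f)}(\rho)\bigr)^{*}$, giving the claimed identity.

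The only point demanding genuine care is the non-commutativity of ${\mathit{End}}(V)$: I must respect the order of operators throughout, ensuring that in $\rho(y x^{-1}) = \rho(y)\circ\rho(x)^{*}$ the factor $\rho(y)$ sits to the left of $\rho(x)^{*}$, so that the factored product comes out in the correct order $\widetilde{g}(\rho)\circ(\widetilde{g}(\rho))^{*}$ rather than the reverse. Everything else is the routine change of variable together with the unitarity of $\rho(x)$ and the conjugate-linearity of the adjoint, both already recorded in the preceding material.
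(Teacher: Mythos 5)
Your proposal is correct and follows essentially the same route as the paper: unravel ${\mathit{AC}}_{f,\beta}(\alpha)$ into $\sum_{x}g(\alpha x)\overline{g(x)}$ using the multiplicativity of the character, rewrite $\rho(\alpha)=\rho(\alpha x)\circ\rho(x)^{*}$ via the homomorphism property and unitarity, and factor the double sum into $\widetilde{g}(\rho)\circ(\widetilde{g}(\rho))^{*}$. Your explicit substitution $y=\alpha x$ is just a cleaner rendering of the paper's step of recognizing $\sum_{\alpha}g(\alpha x)\rho(\alpha x)=\widetilde{g}(\rho)$ for each fixed $x$; the content is identical.
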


\begin{proof}
Let $\rho = \rho_V \in \widetilde{G}$.
\begin{equation*}
\begin{array}{l}
\widetilde{{\mathit{AC}}_{f,\beta}} (\rho) =  \displaystyle \sum_{\alpha \in G} {\mathit{AC}}_{f,\beta} (\alpha) \rho(\alpha)\\
= \displaystyle \sum_{\alpha \in G} (\widetilde{(\chi^{\beta}_H \circ d_{\alpha}f)} (\rho_0))(1) \rho (\alpha)\\
 =  \displaystyle \sum_{\alpha \in G}\sum_{x \in G} \chi^{\beta}_H \circ d_{\alpha}f(x) \rho_0 (x)(1) \rho (\alpha)\\
 =  \displaystyle \sum_{\alpha \in G}\sum_{x \in G} \chi^{\beta}_H (f(\alpha x)f(x)^{-1}) \rho (\alpha)\\
\mbox{(by definition of $\rho_0$)}\\
 =  \displaystyle \sum_{\alpha \in G}\sum_{x \in G} \chi^{\beta}_H (f(\alpha x)) \overline{\chi^{\beta}_H (f(x))} \rho (\alpha)\\
 =  \displaystyle \sum_{\alpha \in G}\sum_{x \in G} \chi^{\beta}_H (f(\alpha x)) \overline{\chi^{\beta}_H (f(x))} \rho (\alpha x x^{-1})\\
 =  \displaystyle \sum_{\alpha \in G}\sum_{x \in G} \chi^{\beta}_H (f(\alpha x)) \overline{\chi^{\beta}_H (f(x))} \rho (\alpha x) \circ \rho (x^{-1})\\
\mbox{($\rho$ is a morphism)}\\
 = \displaystyle \sum_{\alpha \in G}\sum_{x \in G} \chi^{\beta}_H (f(\alpha x)) \overline{\chi^{\beta}_H (f(x))} \rho (\alpha x) \circ \rho (x)^{-1}\\
 = \displaystyle \sum_{\alpha \in G}\sum_{x \in G} \chi^{\beta}_H (f(\alpha x)) \overline{\chi^{\beta}_H (f(x))} \rho (\alpha x) \circ \rho(x)^{*}\\
\mbox{($\rho$ is unitary)}\\
 = \displaystyle \sum_{x \in G} \left(\sum_{\alpha \in G} \chi^{\beta}_H (f(\alpha x))  \rho (\alpha x)\right) \circ (\overline{\chi^{\beta}_H (f(x))}\rho (x)^{*})\\
\mbox{(by linearity)}\\
 = \displaystyle \sum_{x \in G}  (\widetilde{(\chi^{\beta}_H \circ f)} (\rho)) \circ (\overline{\chi^{\beta}_H (f(x))}\rho(x)^{*})\\
 = (\widetilde{(\chi^{\beta}_H \circ f)} (\rho)) \circ \left(\displaystyle \sum_{x \in G}   \overline{\chi^{\beta}_H (f(x))}\rho (x)^{*}\right)\\
 = \widetilde{(\chi^{\beta}_H \circ f)} (\rho)) \circ (\widetilde{(\chi^{\beta}_H \circ f)}(\rho))^{*}.
\end{array}
\end{equation*}\qed
\end{proof}

The dual characterization of perfect nonlinearity in this context is given below. This result generalizes the one of Carlet, Ding and Pott.
\begin{theorem}
Let $f: G \rightarrow H$. The map $f$ is perfect nonlinear if and only if $\forall \rho = \rho_V \in \widetilde{G}$ and $\forall \beta \in H^*$, we have
\begin{equation}\label{bent_Gnab_Hab}
(\widetilde{(\chi_H^{\beta} \circ f)}(\rho)) \circ (\widetilde{(\chi_H^{\beta} \circ f)}(\rho))^* = |G|{\mathit{Id}}_V.
\end{equation}
\end{theorem}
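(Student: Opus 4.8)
The plan is to route everything through the autocorrelation function ${\mathit{AC}}_{f,\beta}$ and to read off perfect nonlinearity as the vanishing of ${\mathit{AC}}_{f,\beta}$ on $G^*$. First I would observe that, for a fixed direction $\alpha$, the derivative $d_{\alpha}f$ is again a map $G \rightarrow H$, so Lemma~\ref{f_pnl_TF_nulle_cas_representations_de_groupe} applies verbatim to it: $d_{\alpha}f$ is balanced if and only if $\widetilde{(\chi^{\beta}_H \circ d_{\alpha}f)}(\rho_0) = 0_{{\mathit{End}}({\mathbb{C}})}$ for every $\beta \in H^*$. Since every element of ${\mathit{End}}({\mathbb{C}})$ is multiplication by a scalar, this endomorphism vanishes exactly when its value at $1$ does, i.e.\ exactly when ${\mathit{AC}}_{f,\beta}(\alpha) = 0$. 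Hence $f$ is perfect nonlinear if and only if, for every $\beta \in H^*$, one has ${\mathit{AC}}_{f,\beta}(\alpha) = 0$ for all $\alpha \in G^*$.

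Next I would apply the first point of Lemma~\ref{lemma_diese} to the scalar function $\phi = {\mathit{AC}}_{f,\beta}: G \rightarrow {\mathbb{C}}$. It gives that ${\mathit{AC}}_{f,\beta}$ vanishes on $G^*$ if and only if $\widetilde{{\mathit{AC}}_{f,\beta}}(\rho) = {\mathit{AC}}_{f,\beta}(e_G)\,{\mathit{Id}}_V$ for every $\rho = \rho_V \in \widetilde{G}$. To turn this into the stated identity I need two ingredients. First, the value of the constant on the right: since $d_{e_G}f(x) = f(x)f(x)^{-1} = e_H$ for all $x$, the composite $\chi^{\beta}_H \circ d_{e_G}f$ is identically $1$, so $\widetilde{(\chi^{\beta}_H \circ d_{e_G}f)}(\rho_0) = \sum_{x \in G}\rho_0(x) = |G|\,{\mathit{Id}}_{{\mathbb{C}}}$, whence ${\mathit{AC}}_{f,\beta}(e_G) = |G|$. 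Second, Proposition~\ref{autocorel_G_nab_H_ab} identifies $\widetilde{{\mathit{AC}}_{f,\beta}}(\rho)$ with $(\widetilde{(\chi^{\beta}_H \circ f)}(\rho)) \circ (\widetilde{(\chi^{\beta}_H \circ f)}(\rho))^{*}$.

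Combining these, the condition ``${\mathit{AC}}_{f,\beta}$ vanishes on $G^*$ for every $\beta \in H^*$'' becomes precisely ``$(\widetilde{(\chi^{\beta}_H \circ f)}(\rho)) \circ (\widetilde{(\chi^{\beta}_H \circ f)}(\rho))^{*} = |G|\,{\mathit{Id}}_V$ for all $\rho = \rho_V \in \widetilde{G}$ and all $\beta \in H^*$'', which is exactly~(\ref{bent_Gnab_Hab}). Since every link in the chain is an equivalence, this establishes the theorem in both directions. The argument is essentially a bookkeeping assembly of the three preceding results, so I do not anticipate a genuine obstacle; the only points demanding care are the identification of ${\mathit{End}}({\mathbb{C}})$ with ${\mathbb{C}}$ via evaluation at $1$, handled consistently so that the vanishing of the endomorphism $\widetilde{(\chi^{\beta}_H \circ d_{\alpha}f)}(\rho_0)$ and the vanishing of the scalar ${\mathit{AC}}_{f,\beta}(\alpha)$ are genuinely interchangeable, and the verification that the constant term ${\mathit{AC}}_{f,\beta}(e_G)$ equals $|G|$ so that the correct normalization appears on the right-hand side.
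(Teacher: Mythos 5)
Your proposal is correct and follows essentially the same route as the paper's own proof: the chain of equivalences through Lemma~\ref{f_pnl_TF_nulle_cas_representations_de_groupe}, the identification of the vanishing of the endomorphism $\widetilde{(\chi^{\beta}_H \circ d_{\alpha}f)}(\rho_0)$ with the vanishing of the scalar ${\mathit{AC}}_{f,\beta}(\alpha)$, the application of the first point of Lemma~\ref{lemma_diese}, the computation ${\mathit{AC}}_{f,\beta}(e_G) = |G|$, and the final substitution via Proposition~\ref{autocorel_G_nab_H_ab} all match the paper step for step. No gaps.
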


\begin{proof}
The map $f$ is perfect nonlinear\\
$\Leftrightarrow$ $\forall \alpha \in G^*,\ d_{\alpha}f$ is balanced (by definition)\\
$\Leftrightarrow$ $\forall \alpha \in G^*,\ \forall \beta \in H^*,\ \widetilde{(\chi^{\beta}_H \circ d_{\alpha}f)} (\rho_0) = 0_{{\mathit{End}}({\mathbb{C}})}$ (according to lemma \ref{f_pnl_TF_nulle_cas_representations_de_groupe})\\
$\Leftrightarrow$ $\forall z \in {\mathbb{C}}$, $\forall \alpha \in G^*,\ \forall \beta \in H^*,\ (\widetilde{(\chi^{\beta}_H \circ d_{\alpha}f)}(\rho_0))(z) = 0$\\
$\Leftrightarrow$ $\forall z \in {\mathbb{C}}$, $\forall \alpha \in G^*,\ \forall \beta \in H^*,\ z{\mathit{AC}}_{f,\beta}(\alpha) = 0$ (by definition of ${\mathit{AC}}_{f,\beta}$)\\
$\Leftrightarrow$ $\forall \alpha \in G^*,\ \forall \beta \in H^*,\ {\mathit{AC}}_{f,\beta}(\alpha) = 0$\\
$\Leftrightarrow$ $\forall \rho = \rho_V \in \widetilde{G},\ \forall \beta \in H^*$, $\widetilde{{\mathit{AC}}_{f,\beta}} (\rho) = {\mathit{AC}}_{f,\beta}(e_G) {\mathit{Id}}_{V}$ (according to the first point of lemma \ref{lemma_diese}).\\
We have 
\begin{equation}
\begin{array}{l l l}
{\mathit{AC}}_{f,\beta} (e_G) &=& (\widetilde{(\chi^{\beta}_H \circ d_{e_G}f)} (\rho_0))(1)\\
&=& \displaystyle \sum_{x \in G} \chi^{\beta}_H (e_H) \rho_0 (x)(1)\\
&=& \displaystyle \sum_{x \in G}\chi^{\beta}_H (e_H)\\
&=& |G|.
\end{array}
\end{equation}
Then $f$ is perfect nonlinear $\Leftrightarrow$ $\forall \beta \in H^*$, $\forall \rho = \rho_V \in \widetilde{G}$, $\widetilde{{\mathit{AC}}_{f,\beta}} (\rho) = |G| {\mathit{Id}}_{V}$\\
$\Leftrightarrow$ $\forall \beta \in H^*$, $\forall \rho \in \widetilde{G}$, $(\widetilde{(\chi^{\beta}_H \circ f)}(\rho))\circ (\widetilde{(\chi^{\beta}_H \circ f)} (\rho))^{*} = |G| {\mathit{Id}}_{V}$ (according to proposition \ref{autocorel_G_nab_H_ab}). \qed
\end{proof}
The functions that satisfy formula (\ref{bent_Gnab_Hab}) are the bent functions in this particular context where $G$ is a finite non Abelian group and $H$ is a finite Abelian group.\\
We can note that this version of bentness is very similar to the one given in theorem \ref{bent_4_Gab_Hab}: the discrete Fourier transform is replaced by its representation-based version, the complex-conjugate is replaced by the adjoint of endomorphisms, the multiplication of complex numbers by the composition of operators and the factor ${\mathit{Id}}_V$ is added. The discrete Fourier transform of Carlet, Ding and Pott's bent functions is, up to a factor $|G|$, ${\mathbb{U}}({\mathbb{C}})$-valued. Regarding this last notion of bentness, the representation-based Fourier transform, also up to the factor $|G|$, is now ${\mathbb{U}}(V)$-valued. It is possible to deduce from this theorem a result really similar to the traditional notion of bentness.
\begin{corollary}
Let $f: G \rightarrow H$. If the map $f$ is perfect nonlinear then we have $\forall \rho = \rho_V \in \widetilde{G}$ and $\forall \beta \in H^*$,
\begin{equation}
\parallel \widetilde{(\chi^{\beta}_H \circ f)}(\rho) \parallel^{2}_{{\mathit{End}}(V)} = |G|\dim_{{\mathbb{C}}}(V),
\end{equation}
where $\parallel \lambda \parallel^2_{{\mathit{End}}(V)} = {\mathit{tr}}(\lambda \circ \lambda^*)$ for $\lambda \in {\mathit{End}}(V)$.
\end{corollary}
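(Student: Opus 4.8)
The plan is to derive this corollary as a direct consequence of the preceding theorem by applying the trace functional. Since $f$ is assumed perfect nonlinear, the theorem immediately supplies, for every $\rho = \rho_V \in \widetilde{G}$ and every $\beta \in H^*$, the operator identity
\begin{equation*}
(\widetilde{(\chi_H^{\beta} \circ f)}(\rho)) \circ (\widetilde{(\chi_H^{\beta} \circ f)}(\rho))^{*} = |G|{\mathit{Id}}_V.
\end{equation*}
So there is essentially nothing combinatorial left to do: the whole content of the corollary is already encoded in this equality, and the task is only to rewrite it in scalar form.

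First I would abbreviate $\lambda = \widetilde{(\chi_H^{\beta} \circ f)}(\rho) \in {\mathit{End}}(V)$, so that the theorem reads $\lambda \circ \lambda^{*} = |G|{\mathit{Id}}_V$. Then I would take the usual trace ${\mathit{tr}}$ of both sides of this identity. On the left-hand side, by the very definition of the norm recalled in the statement, ${\mathit{tr}}(\lambda \circ \lambda^{*}) = \parallel \lambda \parallel^{2}_{{\mathit{End}}(V)}$. On the right-hand side, using linearity of the trace together with ${\mathit{tr}}({\mathit{Id}}_V) = \dim_{{\mathbb{C}}}(V)$, one gets ${\mathit{tr}}(|G|{\mathit{Id}}_V) = |G|\,{\mathit{tr}}({\mathit{Id}}_V) = |G|\dim_{{\mathbb{C}}}(V)$. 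Equating the two sides yields
\begin{equation*}
\parallel \widetilde{(\chi^{\beta}_H \circ f)}(\rho) \parallel^{2}_{{\mathit{End}}(V)} = |G|\dim_{{\mathbb{C}}}(V),
\end{equation*}
which is exactly the desired conclusion, valid for all $\rho = \rho_V \in \widetilde{G}$ and all $\beta \in H^*$.

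There is no genuine obstacle here; the argument is a one-line application of the trace to the theorem's characterization, and every ingredient (the operator identity, the definition $\parallel \lambda \parallel^2_{{\mathit{End}}(V)} = {\mathit{tr}}(\lambda \circ \lambda^*)$, and the value ${\mathit{tr}}({\mathit{Id}}_V) = \dim_{{\mathbb{C}}}(V)$) is immediately available. The only point worth noting is that the implication goes one way: tracing loses information, so the scalar norm condition is weaker than the full operator identity, which is precisely why the corollary is stated as a necessary condition ("if the map $f$ is perfect nonlinear, then \ldots") rather than as an equivalence, and I would make sure the proof does not inadvertently claim the converse.
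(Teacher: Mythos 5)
Your proof is correct and is exactly the paper's argument: apply the trace to both sides of the operator identity $(\widetilde{(\chi_H^{\beta} \circ f)}(\rho)) \circ (\widetilde{(\chi_H^{\beta} \circ f)}(\rho))^* = |G|{\mathit{Id}}_V$ and use ${\mathit{tr}}({\mathit{Id}}_V) = \dim_{{\mathbb{C}}}(V)$. Your closing remark that tracing loses information also matches the paper, which explicitly leaves the converse as an open question.
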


\begin{proof}
The result is obvious by using on each member of (\ref{bent_Gnab_Hab}) the trace ${\mathit{tr}}$ of endomorphisms of $V$. \qed
\end{proof}
An interesting question, kept open in this paper, is to know if, whether or not, the reciprocal assertion of the previous corollary is true.

\subsection{Case where $G$ is Abelian and $H$ is non Abelian}

In this subsection, $G$ is a finite Abelian group and $H$ is a finite {\bf non Abelian} group. Another time a technical result similar to both lemmas \ref{equilibre_TF} and \ref{f_pnl_TF_nulle_cas_representations_de_groupe} is needed to establish a dual characterization of perfect nonlinearity in this context.

\begin{lemma}\label{lemmaGAb_HnAb}
Let $X$ be a finite nonempty set and $f: X \rightarrow H$. Then $f$ is balanced if and only if for each $\rho = \rho_V \in \widetilde{H}^*$, we have
\begin{equation}
\displaystyle \sum_{x \in X} (\rho \circ f)(x) = 0_{{\mathit{End}}(V)}.
\end{equation}
\end{lemma}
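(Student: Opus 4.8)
The goal is to characterize balancedness of $f: X \rightarrow H$ (with $H$ non Abelian) through the vanishing of the sums $\sum_{x \in X}(\rho \circ f)(x)$ over all nontrivial irreducible representations $\rho = \rho_V \in \widetilde{H}^*$. This is the exact analogue of Lemma \ref{equilibre_TF}, with characters replaced by irreducible representations and scalar sums replaced by endomorphism-valued sums, so I expect the proof to mirror that structure while using the machinery from Section \ref{sect2} on the non Abelian side. The natural quantity to work with is the counting function $\phi_f: H \rightarrow \mathbb{N}$, $\phi_f(\gamma) = |\{x \in X \mid f(x) = \gamma\}|$, and I would first rewrite the hypothesis sum in terms of $\phi_f$: since $\sum_{x \in X}(\rho \circ f)(x) = \sum_{\gamma \in H}\phi_f(\gamma)\,\rho(\gamma) = \widetilde{\phi_f}(\rho)$, the condition ``$\sum_{x\in X}(\rho\circ f)(x) = 0_{\mathit{End}(V)}$ for all $\rho \in \widetilde{H}^*$'' is precisely the statement that $\widetilde{\phi_f}(\rho) = 0_{\mathit{End}(V)}$ for every nonprincipal irreducible representation of $H$.

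\smallskip

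\noindent\textbf{Key steps.} With this reformulation, the lemma becomes a direct application of the second point of Lemma \ref{lemma_diese}, applied to the function $\phi_f: H \rightarrow \mathbb{C}$. For the forward direction I would assume $f$ balanced, so that $\phi_f$ is constant equal to $|X|/|H|$; then $\widetilde{\phi_f}(\rho) = \frac{|X|}{|H|}\sum_{\gamma \in H}\rho(\gamma)$, and Lemma \ref{lemma_star} gives $\sum_{\gamma \in H}\rho(\gamma) = 0_{\mathit{End}(V)}$ for every $\rho \in \widetilde{H}^*$, whence the desired vanishing. For the converse I would assume $\widetilde{\phi_f}(\rho) = 0_{\mathit{End}(V)}$ for all $\rho \in \widetilde{H}^*$; the ``$\Leftarrow$'' implication embedded in the second point of Lemma \ref{lemma_diese} then forces $\phi_f$ to be constant. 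Its constant value must be $|X|/|H|$, since summing $\phi_f$ over all $\gamma \in H$ recovers $|X| = \sum_{\gamma \in H}\phi_f(\gamma)$, so the constant equals $|X|/|H|$ and $f$ is balanced.

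\smallskip

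\noindent\textbf{Main obstacle.} There is no deep difficulty once the translation $\sum_{x}(\rho\circ f)(x) = \widetilde{\phi_f}(\rho)$ is in place; the real content has already been carried by Lemma \ref{lemma_diese} and Lemma \ref{lemma_star}. The only point demanding care is that Lemma \ref{lemma_diese} is stated for a generic $\phi: H \rightarrow \mathbb{C}$ and its second point equates constancy of $\phi$ with the vanishing of $\widetilde{\phi}$ on $\widetilde{H}^*$ exactly; I must make sure the hypothesis quantifies over $\widetilde{H}^*$ (the nonprincipal part) and not over all of $\widetilde{H}$, which is consistent with the statement. I would also verify that $X$ being merely a finite set, rather than a group, causes no trouble: the representation-based Fourier transform of Section \ref{sect2} lives entirely on the target group $H$, so only the structure of $H$ is used and the argument goes through verbatim.
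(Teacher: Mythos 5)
Your proposal is correct and follows essentially the same route as the paper: rewrite $\sum_{x\in X}(\rho\circ f)(x)$ as $\widetilde{\phi_f}(\rho)$, use Lemma \ref{lemma_star} for the forward direction, and the second point of Lemma \ref{lemma_diese} for the converse, pinning down the constant value of $\phi_f$ as $|X|/|H|$. The only (harmless) difference is that you recover the constant by summing $\phi_f$ over $H$ directly, whereas the paper extracts it from the trace formula in the proof of Lemma \ref{lemma_diese}.
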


\begin{proof}
Let $\rho = \rho_V \in \widetilde{H}$. We have 
\begin{equation}
\begin{array}{l l l}
\displaystyle \sum_{x \in X} (\rho \circ f)(x) &=& \displaystyle \sum_{\gamma \in H} |\{x \in X | f(x) = \gamma\}|\rho(\gamma)\\
&=& \displaystyle \sum_{\gamma \in H}\phi_f (\gamma)\rho(\gamma)\\
&=& \widetilde{\phi_f} (\rho).
\end{array}
\end{equation}
\begin{itemize}
\item [$\Rightarrow$)] Let suppose that $f$ is balanced and let $\rho \in \widetilde{H}^*$, then we have 
\begin{equation*}
\displaystyle \sum_{x \in X}(\rho \circ f)(x) = \frac{|X|}{|H|} \sum_{\gamma \in H}\rho(\gamma) = 0_{{\mathit{End}}(V)}
\end{equation*}
(according to lemma \ref{lemma_star}).
\item [$\Leftarrow$)] Let suppose that for all $\rho = \rho_V \in \widetilde{H}^*$, $\displaystyle \sum_{x \in X}(\rho \circ f)(x) = 0_{{\mathit{End}}(V)}$. Then the representation-based Fourier transform of $\phi_f: H \rightarrow \mathbb{N} \subset {\mathbb{C}}$ is
\begin{equation}
\rho_V \mapsto \left \{
\begin{array}{l l}
0_{{\mathit{End}}(V)} & \mbox{if}\ \rho_V \in \widetilde{H}^*,\\
|X| & \mbox{if}\ \rho_V = \rho_0.
\end{array} \right .
\end{equation}
According to lemma \ref{lemma_diese}, we know that $\phi_f$ is constant and more precisely (according to the proof of the lemma), $\forall \beta \in H$, $\phi_f (\beta) = \displaystyle \frac{1}{|X|}{\mathit{tr}}(\widetilde{\phi_f}({\mathit{Id}}_{{\mathbb{C}}}))$. But $\widetilde{\phi_f} ({\mathit{Id}}_{{\mathbb{C}}}) = \displaystyle \sum_{\gamma \in H}\phi_f(\gamma){\mathit{Id}}_{{\mathbb{C}}} = |X| {\mathit{Id}}_{{\mathbb{C}}}$ (by definition of $\phi_f$). Then $\forall \beta \in H$, $\phi_f (\beta) = \displaystyle \frac{|X|}{|H|}$ and $f$ is balanced.
\end{itemize} \qed
\end{proof}
As in the previous case, we introduce a kind of autocorrelation function and we compute its discrete Fourier transform.
\begin{proposition}\label{autocorel_G_ab_H_nab}
Let $f: G \rightarrow H$ and $\rho = \rho_V \in \widetilde{H}$. We define the {\it autocorrelation function} of $f$ by 
\begin{equation*}
\begin{array}{l l l l}
{\mathit{AC}}_{f,\rho}: & G & \rightarrow & {\mathit{End}}(V)\\
& \alpha & \mapsto & \displaystyle \sum_{x \in G} (\rho \circ d_{\alpha}f)(x).
\end{array}
\end{equation*}
Then for all $\alpha \in G$, 
\begin{equation}
\widehat{{\mathit{AC}}_{f,\rho}}^{\mathit{MD}}(\alpha) = (\widehat{(\rho \circ f)}^{\mathit{MD}} (\alpha))\circ (\widehat{(\rho \circ f)}^{\mathit{MD}} (\alpha))^{*}.
\end{equation}
\end{proposition}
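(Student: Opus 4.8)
The plan is to establish the identity by a direct computation that mirrors the proof of Proposition~\ref{autocorel_G_nab_H_ab}, with the roles of the two groups interchanged: here it is $H$ that is non Abelian, so it is fed into the irreducible representation $\rho$, whereas $G$ is Abelian, so the appropriate transform on $G$ is the multidimensional Fourier transform (legitimate because ${\mathit{AC}}_{f,\rho}$ is ${\mathit{End}}(V)$-valued). Fix $\rho = \rho_V \in \widetilde{H}$ and $\alpha \in G$. First I would unfold the left-hand side, expanding both the multidimensional Fourier transform on $G$ and the definition of the autocorrelation function, to obtain
\begin{equation*}
\widehat{{\mathit{AC}}_{f,\rho}}^{\mathit{MD}}(\alpha) = \sum_{\beta \in G} \chi^{\alpha}_G(\beta) \sum_{x \in G} \rho\big(f(\beta x) f(x)^{-1}\big).
\end{equation*}

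Next I would use that $\rho$ is a group homomorphism to split the summand as $\rho(f(\beta x)) \circ \rho(f(x))^{-1}$, and then that $\rho$ is unitary to rewrite $\rho(f(x))^{-1} = \rho(f(x))^*$. The crucial manipulation is the change of variables $\gamma = \beta x$ in the inner sum: for each fixed $x$ this is a bijection of $G$, and the multiplicativity of the character $\chi^{\alpha}_G$ together with $\chi^{\alpha}_G(x^{-1}) = \overline{\chi^{\alpha}_G(x)}$ gives $\chi^{\alpha}_G(\beta) = \chi^{\alpha}_G(\gamma)\,\overline{\chi^{\alpha}_G(x)}$ (this is exactly where the Abelianness of $G$, hence the availability of characters and of the multidimensional transform, is used). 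After the substitution the indices $\gamma$ and $x$ decouple, and because the factor carrying $\rho(f(\beta x))$ always sits to the left of the factor carrying $\rho(f(x))^*$, the double sum factors as a composition
\begin{equation*}
\Big(\sum_{\gamma \in G} \chi^{\alpha}_G(\gamma)\, \rho(f(\gamma))\Big) \circ \Big(\sum_{x \in G} \overline{\chi^{\alpha}_G(x)}\, \rho(f(x))^*\Big).
\end{equation*}

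Finally I would identify the two factors. The first is $\widehat{(\rho \circ f)}^{\mathit{MD}}(\alpha)$ by definition. For the second, I would invoke the anti-linearity of the adjoint: since $(c\lambda)^* = \overline{c}\,\lambda^*$, applying $*$ to $\widehat{(\rho \circ f)}^{\mathit{MD}}(\alpha) = \sum_x \chi^{\alpha}_G(x)\rho(f(x))$ yields precisely $\sum_x \overline{\chi^{\alpha}_G(x)}\,\rho(f(x))^*$, which matches the second factor. This gives the claimed equality $\widehat{{\mathit{AC}}_{f,\rho}}^{\mathit{MD}}(\alpha) = \widehat{(\rho \circ f)}^{\mathit{MD}}(\alpha) \circ (\widehat{(\rho \circ f)}^{\mathit{MD}}(\alpha))^*$. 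I do not expect a serious obstacle; the computation is the mechanical combination of three facts ($\rho$ a homomorphism, $\rho$ unitary, $G$ Abelian). The only point requiring care is the bookkeeping in the change of variables, namely verifying that the complex conjugate produced by $\chi^{\alpha}_G(x^{-1})$ pairs correctly with the anti-linearity of the adjoint so that the right-hand factor is genuinely $(\widehat{(\rho \circ f)}^{\mathit{MD}}(\alpha))^*$ rather than something merely resembling it, and that the composition order is preserved throughout.
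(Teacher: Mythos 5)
Your proposal is correct and follows essentially the same computation as the paper's proof: expand the multidimensional Fourier transform and the derivative, use that $\rho$ is a unitary homomorphism to write $\rho(f(\beta x)f(x)^{-1})=\rho(f(\beta x))\circ\rho(f(x))^*$, insert $\chi^{\alpha}_G(\beta)=\chi^{\alpha}_G(\beta x)\overline{\chi^{\alpha}_G(x)}$, and factor the double sum into the transform composed with its adjoint. Your explicit change of variables $\gamma=\beta x$ is just a more transparent bookkeeping of the step the paper performs by writing $\chi^{\alpha}_G(xyy^{-1})$ and summing, so there is nothing to add.
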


\begin{proof}
Let $\alpha \in G$.
\begin{equation}\label{egalite1}
\begin{array}{l}
\widehat{{\mathit{AC}}_{f,\rho}}^{\mathit{MD}}(\alpha) = \displaystyle \sum_{x \in G}\chi^{\alpha}_G (x) {\mathit{AC}}_{f,\rho}(x)\\
= \displaystyle \sum_{x \in G} \chi^{\alpha}_G (x) \sum_{y \in G} (\rho \circ d_{x}f)(y)\\
= \displaystyle \sum_{x \in G} \sum_{y \in G} \chi^{\alpha}_G (x) \rho(f(x y)) \circ (\rho(f(y)))^*\\
\mbox{(since $\rho(x)$ is unitary)}\\
= \displaystyle \sum_{x \in G} \sum_{y \in G} \chi^{\alpha}_G (x y y^{-1}) \rho(f(x y)) \circ (\rho(f(y)))^*\\
= \displaystyle \sum_{x \in G} \sum_{y \in G} \chi^{\alpha}_G (x y) \rho(f(x y)) \circ \overline{\chi^{\alpha}_G (y)}(\rho(f(y)))^*\\
= \displaystyle \sum_{y \in G} \widehat{\rho \circ f}^{\mathit{MD}}(\alpha) \circ (\chi^{\alpha}_G (y) \rho(f(y)))^*\\
= (\widehat{(\rho \circ f)}^{\mathit{MD}}(\alpha)) \circ (\widehat{(\rho \circ f)}^{\mathit{MD}}(\alpha))^*.
\end{array}
\end{equation} \qed
\end{proof}
The corresponding notion of bentness in this context is given by the following theorem.
\begin{theorem}
Let $f: G \rightarrow H$. The map $f$ is perfect nonlinear if and only if $\forall \alpha \in G$, $\forall \rho = \rho_V \in \widetilde{H}^*$,
\begin{equation}\label{bent_Gab_Hnab}
(\widehat{(\rho \circ f)}^{\mathit{MD}}(\alpha)) \circ (\widehat{(\rho \circ f)}^{\mathit{MD}}(\alpha))^* = |G|{\mathit{Id}}_V.
\end{equation}
\end{theorem}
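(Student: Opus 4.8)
The plan is to mirror the proof of the preceding theorem, replacing the representation-based Fourier transform by the multidimensional one, Lemma~\ref{f_pnl_TF_nulle_cas_representations_de_groupe} by Lemma~\ref{lemmaGAb_HnAb}, and the first point of Lemma~\ref{lemma_diese} by Proposition~\ref{prop_TF_multidimensional}. First I would unwind the definition: $f$ is perfect nonlinear if and only if $d_{\alpha}f$ is balanced for every $\alpha \in G^*$. Applying Lemma~\ref{lemmaGAb_HnAb} to each derivative $d_{\alpha}f$ (taking $X = G$), this is equivalent to requiring that for all $\alpha \in G^*$ and all $\rho = \rho_V \in \widetilde{H}^*$ we have $\sum_{x \in G}(\rho \circ d_{\alpha}f)(x) = 0_{{\mathit{End}}(V)}$, that is, ${\mathit{AC}}_{f,\rho}(\alpha) = 0_{{\mathit{End}}(V)}$.

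Next I would fix $\rho = \rho_V \in \widetilde{H}^*$ and reorganise the quantifiers so that the condition reads: the function ${\mathit{AC}}_{f,\rho}: G \rightarrow {\mathit{End}}(V)$ vanishes on $G^*$. Since ${\mathit{End}}(V)$ is itself a finite-dimensional complex vector space, Proposition~\ref{prop_TF_multidimensional} applies verbatim with $\phi = {\mathit{AC}}_{f,\rho}$ and the target space ${\mathit{End}}(V)$; it tells us that this vanishing on $G^*$ is equivalent to $\widehat{{\mathit{AC}}_{f,\rho}}^{\mathit{MD}}(\alpha) = {\mathit{AC}}_{f,\rho}(e_G)$ for every $\alpha \in G$. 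A short computation evaluates the boundary term: ${\mathit{AC}}_{f,\rho}(e_G) = \sum_{x \in G}(\rho \circ d_{e_G}f)(x) = \sum_{x \in G}\rho(f(x)f(x)^{-1}) = \sum_{x \in G}\rho(e_H) = |G|{\mathit{Id}}_V$, so the condition becomes $\widehat{{\mathit{AC}}_{f,\rho}}^{\mathit{MD}}(\alpha) = |G|{\mathit{Id}}_V$ for all $\alpha \in G$.

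Finally I would invoke Proposition~\ref{autocorel_G_ab_H_nab}, which identifies $\widehat{{\mathit{AC}}_{f,\rho}}^{\mathit{MD}}(\alpha)$ with $(\widehat{(\rho \circ f)}^{\mathit{MD}}(\alpha))\circ(\widehat{(\rho \circ f)}^{\mathit{MD}}(\alpha))^*$, turning the last equality into exactly~(\ref{bent_Gab_Hnab}). Reassembling the chain of equivalences, quantified over all $\alpha \in G$ and all $\rho \in \widetilde{H}^*$, then yields the stated characterisation.

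As for the main obstacle, there is no genuinely hard analytic step: the argument is essentially a bookkeeping of equivalences, and every nontrivial ingredient has already been isolated (balancedness via Lemma~\ref{lemmaGAb_HnAb}, the autocorrelation factorisation via Proposition~\ref{autocorel_G_ab_H_nab}, and the vanishing criterion via Proposition~\ref{prop_TF_multidimensional}). The one point demanding care is the legitimacy of applying Proposition~\ref{prop_TF_multidimensional} with the ambient vector space taken to be ${\mathit{End}}(V)$ rather than $V$, together with the computation ${\mathit{AC}}_{f,\rho}(e_G) = |G|{\mathit{Id}}_V$; it is precisely this boundary value that supplies the factor $|G|{\mathit{Id}}_V$ on the right-hand side of~(\ref{bent_Gab_Hnab}).
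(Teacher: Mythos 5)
Your proposal is correct and follows essentially the same route as the paper's own proof: the same chain of equivalences via Lemma~\ref{lemmaGAb_HnAb}, Proposition~\ref{prop_TF_multidimensional} applied to ${\mathit{AC}}_{f,\rho}$ viewed as an ${\mathit{End}}(V)$-valued map, the boundary computation ${\mathit{AC}}_{f,\rho}(e_G)=|G|{\mathit{Id}}_V$, and the factorisation from Proposition~\ref{autocorel_G_ab_H_nab}. Your explicit remark that Proposition~\ref{prop_TF_multidimensional} is being applied with ambient space ${\mathit{End}}(V)$ rather than $V$ is a point the paper leaves implicit, and it is worth making.
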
 

\begin{proof}

\begin{equation}\label{egalite2}
\begin{array}{l}
$f$\ \mbox{is perfect nonlinear} \Leftrightarrow \forall \alpha \in G^*,\ d_{\alpha}f\ \mbox{is balanced}\\
\Leftrightarrow \forall \alpha \in G^*,\ \forall \rho = \rho_V \in \widetilde{H}^*,\ \displaystyle \sum_{x \in G}(\rho \circ d_{\alpha}f)(x) = 0_{{\mathit{End}}(V)}\\
\mbox{(according to lemma \ref{lemmaGAb_HnAb})}\\
\Leftrightarrow \forall \alpha \in G^*,\ \forall \rho \in \widetilde{H}^*,\ {\mathit{AC}}_{f,\rho}(\alpha) = 0_{{\mathit{End}}(V)}\\
\mbox{(by definition of ${\mathit{AC}}_{f,\rho}$)}\\
\Leftrightarrow \forall \alpha \in G,\ \forall \rho \in \widetilde{H}^*,\ \widehat{{AC}_{f,\rho}}^{\mathit{MD}}(\alpha) = {\mathit{AC}}_{f,\rho}(e_G)\\
 \mbox{(according to proposition \ref{prop_TF_multidimensional}).}
\end{array}
\end{equation}
But ${\mathit{AC}}_{f,\rho}(e_G) = \displaystyle \sum_{x \in G}(\rho \circ d_{e_G}f)(x) = \sum_{x \in G}\rho(e_H) = \sum_{x \in G}{\mathit{Id}}_V = |G| {\mathit{Id}}_V$. Then according to (\ref{egalite1}) and (\ref{egalite2}),
$f$ is perfect nonlinear  $\Leftrightarrow$ $\forall \alpha \in G,\ \forall \rho = \rho_V \in \widetilde{H}^*$,
\begin{equation} 
(\widehat{(\rho \circ f)}^{\mathit{MD}} (\alpha)) \circ (\widehat{(\rho \circ f)}^{\mathit{MD}} (\alpha))^* = |G| {\mathit{Id}}_V.
\end{equation} \qed
\end{proof}
Another time, by using the trace on both sides of (\ref{bent_Gab_Hnab}), we deduce the following corollary. As in the previous case, an interesting question should be to check if this result is or not a sufficient condition for bentness in this particular context.
\begin{corollary}
Let $f: G \rightarrow H$. If the map $f$ is perfect nonlinear then $\forall \alpha \in G$ and $\forall \rho = \rho_V \in \widetilde{H}^*$,
\begin{equation}
\parallel \widehat{(\rho \circ f)}^{\mathit{MD}}(\alpha) \parallel_{{\mathit{End}}(V)}^2 = |G|\dim_{{\mathbb{C}}}(V).
\end{equation}
\end{corollary}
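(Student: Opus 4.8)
The plan is to obtain this corollary directly from the preceding theorem by applying the trace functional to both sides of the characterizing identity~(\ref{bent_Gab_Hnab}). Since $f$ is assumed perfect nonlinear, that theorem guarantees the operator equality
\begin{equation*}
(\widehat{(\rho \circ f)}^{\mathit{MD}}(\alpha)) \circ (\widehat{(\rho \circ f)}^{\mathit{MD}}(\alpha))^* = |G|{\mathit{Id}}_V
\end{equation*}
in ${\mathit{End}}(V)$, for every $\alpha \in G$ and every $\rho = \rho_V \in \widetilde{H}^*$. The whole argument is the exact mirror of the proof of the corollary in the case where $G$ is non Abelian and $H$ is Abelian, with the representation-based Fourier transform replaced throughout by the multidimensional Fourier transform $\widehat{\cdot}^{\mathit{MD}}$.

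First I would take the trace ${\mathit{tr}}$ of the right-hand side. By linearity of the trace together with ${\mathit{tr}}({\mathit{Id}}_V) = \dim_{{\mathbb{C}}}(V)$, this yields ${\mathit{tr}}(|G|{\mathit{Id}}_V) = |G|\dim_{{\mathbb{C}}}(V)$. Then I would take the trace of the left-hand side: writing $\lambda = \widehat{(\rho \circ f)}^{\mathit{MD}}(\alpha)$, the left member is $\lambda \circ \lambda^*$, whose trace is by definition the squared norm $\parallel \lambda \parallel^2_{{\mathit{End}}(V)} = {\mathit{tr}}(\lambda \circ \lambda^*)$ introduced in the analogous corollary of the previous subsection. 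Equating the two traces gives the claimed equality.

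Both steps are immediate consequences of the linearity of the trace and its value on the identity, so there is no genuine obstacle here; the only point that deserves a word of care is that the norm symbol $\parallel \cdot \parallel_{{\mathit{End}}(V)}$ is precisely the one defined via $\parallel \lambda \parallel^2_{{\mathit{End}}(V)} = {\mathit{tr}}(\lambda \circ \lambda^*)$, so that the trace of the left-hand operator literally coincides with that norm. This makes the passage from the theorem to the corollary a one-line computation.
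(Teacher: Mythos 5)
Your proposal is correct and follows exactly the paper's own route: the paper likewise obtains this corollary by applying the trace to both members of the identity (\ref{bent_Gab_Hnab}) furnished by the preceding theorem, using ${\mathit{tr}}(|G|{\mathit{Id}}_V) = |G|\dim_{{\mathbb{C}}}(V)$ and the definition $\parallel \lambda \parallel^2_{{\mathit{End}}(V)} = {\mathit{tr}}(\lambda \circ \lambda^*)$. Nothing is missing.
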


\subsection{Case where $G$ and $H$ are both non Abelian}

In this subsection, $G$ and $H$ are both finite {\bf non Abelian} groups.\\

Let $\rho' = \rho_{W}' \in \widetilde{H}$ and $\displaystyle B = \{e_i\}_{i=1}^{\dim_{{\mathbb{C}}}(W)}$ be an orthonormal basis of $W$ (for the scalar product $\langle .,. \rangle_{H,\rho',W}$ of $W$ as introduced by (\ref{inner_product_inv})) in which for all $y \in H$, $\rho'(y)$ is a unitary operator. For $(i,j) \in \{1,\ldots,\dim_{{\mathbb{C}}}(W)\}^2$, let define 
\begin{equation}
\begin{array}{l l l l}
\rho_{ij}': & H & \rightarrow & {\mathbb{C}}\\
& y & \mapsto & \langle \rho'(e_i),e_j\rangle_{H,\rho',W}.
\end{array}
\end{equation} 
In other terms, for each $y \in H$, $\rho_{ij}'(y)$ is simply the coefficient $(i,j)$ of the $\dim_{{\mathbb{C}}}(W) \times \dim_{{\mathbb{C}}}(W)$ unitary matrix that represents $\rho'(y)$ in the basis $B$.\\
Let see some obvious results on $\rho_{ij}'$ for $(i,j) \in \{1,\ldots,\dim_{{\mathbb{C}}}(W)\}^2$.
\begin{enumerate}
\item Let $(y_1,y_2) \in H^2$. We have $\rho'(y_1 y_2) = \rho' (y_1) \circ \rho'(y_2)$. Then we have $\rho_{ij}'(y_1 y_2) = \displaystyle \sum_{k = 1}^{\dim_{{\mathbb{C}}}(W)} \rho_{ik}' (y_1)\rho_{kj}' (y_2)$;
\item Let $y \in H$. Since $\rho'(y^{-1}) = \rho'(y)^*$ then we deduce that $\rho_{ij}'(y^{-1}) = \overline{\rho_{ji}'(y)}$.
\end{enumerate}
Note also that the identity map ${\mathit{Id}}_W$ is written in any orthonormal basis of $W$ as the identity matrix and $0_{{\mathit{End}}(W)}$ is associated, in any basis of $W$, with the all-zero matrix.\\

As in the previous subsections, we introduce some kind of autocorrelation function for $f: G \rightarrow H$. 

\begin{proposition}
Let $f: G \rightarrow H$, $\rho' = \rho_W' \in \widetilde{H}$ and $(i,j) \in \{1,\ldots,\dim_{{\mathbb{C}}}(W)\}^2$. We define the {\it autocorrelation function} of $f$
\begin{equation}
\begin{array}{l l l l}
{\mathit{AC}}_{f,\rho',i,j}: & G & \rightarrow & {\mathbb{C}}\\
& \alpha & \mapsto & \displaystyle \sum_{x \in G}(\rho_{ij}' \circ d_{\alpha}f)(x).
\end{array}
\end{equation}
Then for all $\rho = \rho_V \in \widetilde{G}$,
\begin{equation}
\widetilde{{\mathit{AC}}_{f,\rho',i,j}}(\rho) = \displaystyle \sum_{k=1}^{\mathit{dim}_{{\mathbb{C}}}(W)}(\widetilde{(\rho_{ik}' \circ f)}(\rho)) \circ (\widetilde{(\rho_{jk}' \circ f)}(\rho))^*.
\end{equation}
\end{proposition}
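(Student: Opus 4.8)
The plan is to mimic the two preceding autocorrelation computations (Propositions \ref{autocorel_G_nab_H_ab} and \ref{autocorel_G_ab_H_nab}), the only genuinely new ingredient being the two elementary identities satisfied by the matrix coefficients $\rho_{ij}'$. First I would unfold every definition: by definition of the representation-based Fourier transform, $\widetilde{\mathit{AC}_{f,\rho',i,j}}(\rho) = \sum_{\alpha \in G} \mathit{AC}_{f,\rho',i,j}(\alpha)\rho(\alpha)$, and by definition of the autocorrelation function and of the derivative $d_{\alpha}f$ this becomes $\sum_{\alpha \in G}\sum_{x \in G} \rho_{ij}'(f(\alpha x)f(x)^{-1})\rho(\alpha)$.

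The heart of the argument is to rewrite the scalar $\rho_{ij}'(f(\alpha x)f(x)^{-1})$. Applying the coefficient-of-a-product formula (the first obvious result on $\rho_{ij}'$) gives $\sum_{k=1}^{\dim_{\mathbb{C}}(W)} \rho_{ik}'(f(\alpha x))\rho_{kj}'(f(x)^{-1})$, and then the conjugate relation $\rho_{kj}'(f(x)^{-1}) = \overline{\rho_{jk}'(f(x))}$ (the second obvious result, which comes from the unitarity of $\rho'$) turns the second factor into $\overline{\rho_{jk}'(f(x))}$. At this stage the summand is $\sum_{k} \rho_{ik}'(f(\alpha x))\overline{\rho_{jk}'(f(x))}\rho(\alpha)$, exactly parallel to the scalar appearing in Proposition \ref{autocorel_G_nab_H_ab}, but now carrying the extra summation over $k$.

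From here the remaining steps are the standard reindexing already used in the earlier proofs, performed for each fixed $k$. I would write $\rho(\alpha) = \rho(\alpha x x^{-1}) = \rho(\alpha x)\circ \rho(x)^{-1} = \rho(\alpha x)\circ \rho(x)^*$, using that $\rho$ is a unitary representation; then, for fixed $x$ and $k$, substitute $\beta = \alpha x$ so that $\sum_{\alpha} \rho_{ik}'(f(\alpha x))\rho(\alpha x) = \widetilde{(\rho_{ik}'\circ f)}(\rho)$, which is independent of $x$ and can be pulled out on the left of the composition. The residual factor $\sum_{x \in G} \overline{\rho_{jk}'(f(x))}\rho(x)^*$ is recognized, by anti-linearity of the adjoint, as $(\sum_{x\in G}\rho_{jk}'(f(x))\rho(x))^* = (\widetilde{(\rho_{jk}'\circ f)}(\rho))^*$. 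Summing over $k$ yields precisely the claimed formula.

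The only delicate point — and the one I would check most carefully — is the index bookkeeping: one must verify that the summation index $k$ produced by the coefficient-of-a-product formula ends up as the \emph{second} subscript in both $\rho_{ik}'$ and $\rho_{jk}'$, so that the two factors become genuine adjoints of one another and their composition reproduces the $k$-th term of the stated sum. This is forced by the conjugate relation $\rho_{kj}'(y^{-1}) = \overline{\rho_{jk}'(y)}$, which swaps the indices exactly as needed; everything else is the same mechanical change of variable $\alpha x \mapsto \beta$ and factorization carried out twice above.
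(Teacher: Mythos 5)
Your proposal is correct and follows essentially the same route as the paper's own proof: expand the matrix coefficient $\rho_{ij}'(f(\alpha x)f(x)^{-1})$ via the product and conjugate identities, insert $\rho(\alpha)=\rho(\alpha x)\circ\rho(x)^*$, and factor the double sum into the composition of the two Fourier transforms, using anti-linearity of the adjoint for the second factor. The index bookkeeping you flag is handled exactly as you describe, so nothing is missing.
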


\begin{proof}
Let $\rho = \rho_V \in \widetilde{G}$.
\begin{equation}\label{egalite3}
\begin{array}{l}
\widetilde{{\mathit{AC}}_{f,\rho',i,j}}(\rho) = \displaystyle \sum_{x \in G} {\mathit{AC}}_{f,\rho',i,j}(x) \rho(x)\\
= \displaystyle \sum_{x \in G} \sum_{y \in G} (\rho_{ij}' \circ d_x f)(y) \rho(x)\\
=\displaystyle \sum_{x \in G} \sum_{y \in G} \rho_{ij}' (f(xy)f(y)^{-1}) \rho(x)\\
= \displaystyle \sum_{x \in G} \sum_{y \in G} \sum_{k = 1}^{\dim_{{\mathbb{C}}}(W)} \rho_{ik}'(f(xy))\overline{\rho_{jk}'(f(y))} \rho(x)\\
= \displaystyle \sum_{k=1}^{\dim_{{\mathbb{C}}}(W)} \sum_{x \in G} \sum_{y \in G}  \rho_{ik}'(f(xy))\overline{\rho_{jk}'(f(y))} \rho(xyy^{-1})\\
= \displaystyle \sum_{k=1}^{\dim_{{\mathbb{C}}}(W)} \sum_{x \in G} \sum_{y \in G}  \rho_{ik}'(f(xy))\overline{\rho_{jk}'(f(y))} \rho(xy) \circ \rho(y)^*\\
= \displaystyle \sum_{k = 1}^{\dim_{{\mathbb{C}}}(W)} \sum_{x \in G} \sum_{y \in G} \rho_{ik}'(f(xy)) \rho(xy) \circ \left ( \overline{\rho_{jk}'(f(y))} \rho(y)^*\right)\\
= \displaystyle \sum_{k = 1}^{\dim_{{\mathbb{C}}}(W)}  (\widetilde{(\rho_{ik}' \circ f)}(\rho)) \circ \sum_{y \in G} \left ( \rho_{jk}'(f(y)) \rho(y)\right )^*\\
=\displaystyle \sum_{k = 1}^{\dim_{{\mathbb{C}}}(W)} (\widetilde{(\rho_{ik}' \circ f)}(\rho)) \circ (\widetilde{(\rho_{jk}' \circ f)}(\rho))^*.
\end{array}
\end{equation} \qed
\end{proof}
Using this autocorrelation function and its Fourier transform we can exhibit the appropriate notion of bentness for this context where both groups $G$ and $H$ are non Abelian.
\begin{theorem}
Let $f: G \rightarrow H$. The map $f$ is perfect nonlinear if and only if $\forall \rho = \rho_V \in \widetilde{G}$, $\forall \rho' = \rho_{W}' \in \widetilde{H}^*$, $\forall (i,j) \in \{1,\ldots,\dim_{{\mathbb{C}}}(W)\}^2$,
\begin{equation}\label{BentGHNAb}
\displaystyle \sum_{k=1}^{\dim_{{\mathbb{C}}}(W)} (\widetilde{(\rho_{ik}' \circ f)}(\rho)) \circ (\widetilde{(\rho_{jk}' \circ f)}(\rho))^* = \left \{
\begin{array}{l}
|G| {\mathit{Id}}_V\ \mbox{if}\ i=j,\\
0_{{\mathit{End}}(V)}\ \mbox{if}\ i \not = j.
\end{array} \right .
\end{equation}
\end{theorem}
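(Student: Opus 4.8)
The plan is to run the same chain of equivalences used in the two previous theorems, now carrying one extra index pair $(i,j)$ coming from the matrix entries of $\rho'$. The three ingredients are already available: Lemma~\ref{lemmaGAb_HnAb}, which turns balancedness into the vanishing of a representation-based sum over $H$; the first point of Lemma~\ref{lemma_diese}, which turns a scalar function vanishing off $e_G$ into a statement about its representation-based Fourier transform over $G$; and the preceding proposition, which computes $\widetilde{{\mathit{AC}}_{f,\rho',i,j}}(\rho)$ as the sum over $k$ that appears on the left-hand side of~(\ref{BentGHNAb}).

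First I would unfold the definition: $f$ is perfect nonlinear iff $d_{\alpha}f$ is balanced for every $\alpha \in G^*$. Applying Lemma~\ref{lemmaGAb_HnAb} to each $d_{\alpha}f: G \rightarrow H$, this is equivalent to $\sum_{x \in G}(\rho' \circ d_{\alpha}f)(x) = 0_{{\mathit{End}}(W)}$ for every $\alpha \in G^*$ and every $\rho' = \rho_W' \in \widetilde{H}^*$. The next step, which is the only genuinely new manipulation, is to read this endomorphism equation entry by entry in the fixed orthonormal basis $B$: since $0_{{\mathit{End}}(W)}$ is the all-zero matrix, the sum vanishes iff each of its matrix coefficients vanishes, and the $(i,j)$ coefficient of $\sum_{x \in G}\rho'(d_{\alpha}f(x))$ is precisely $\sum_{x \in G}\rho_{ij}'(d_{\alpha}f(x)) = {\mathit{AC}}_{f,\rho',i,j}(\alpha)$. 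Hence perfect nonlinearity is equivalent to ${\mathit{AC}}_{f,\rho',i,j}(\alpha) = 0$ for all $\alpha \in G^*$, all $\rho' \in \widetilde{H}^*$ and all $(i,j)$.

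Now I would apply the first point of Lemma~\ref{lemma_diese} to each scalar function ${\mathit{AC}}_{f,\rho',i,j}: G \rightarrow {\mathbb{C}}$: it vanishes on $G^*$ iff $\widetilde{{\mathit{AC}}_{f,\rho',i,j}}(\rho) = {\mathit{AC}}_{f,\rho',i,j}(e_G)\,{\mathit{Id}}_V$ for every $\rho = \rho_V \in \widetilde{G}$. Substituting the value of the Fourier transform supplied by the preceding proposition on the left produces exactly the sum over $k$. It then remains to evaluate the constant: since $d_{e_G}f(x) = f(x)f(x)^{-1} = e_H$, we get ${\mathit{AC}}_{f,\rho',i,j}(e_G) = \sum_{x \in G}\rho_{ij}'(e_H) = |G|\,\rho_{ij}'(e_H)$, and because $\rho'(e_H) = {\mathit{Id}}_W$ its $(i,j)$ entry is the Kronecker symbol, so this equals $|G|$ when $i = j$ and $0$ when $i \neq j$. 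This is precisely the case split on the right-hand side of~(\ref{BentGHNAb}), which closes the equivalence.

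I do not expect a serious obstacle, since each arrow in the chain is furnished by an earlier result. The only points demanding genuine care are the passage from the single ${\mathit{End}}(W)$-valued vanishing condition to the family of scalar conditions indexed by $(i,j)$, which must be justified through the matrix representation in $B$ rather than left implicit, and the correct bookkeeping of the Kronecker delta arising from ${\mathit{Id}}_W$, which is exactly what yields the two-case answer in the statement.
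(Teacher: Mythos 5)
Your proposal is correct and follows essentially the same route as the paper's proof: the same chain of equivalences via Lemma~\ref{lemmaGAb_HnAb}, the entry-by-entry reading of the $0_{{\mathit{End}}(W)}$ condition in the basis $B$, the first point of Lemma~\ref{lemma_diese}, the preceding proposition for $\widetilde{{\mathit{AC}}_{f,\rho',i,j}}(\rho)$, and the Kronecker-delta evaluation of ${\mathit{AC}}_{f,\rho',i,j}(e_G)$. No gaps; if anything, you make the matrix-coefficient step more explicit than the paper does.
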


\begin{proof}
\begin{equation}\label{egalite4}
\begin{array}{l}
f\ \mbox{is perfect nonlinear} \Leftrightarrow \forall \alpha \in G^*,\ d_{\alpha}f\ \mbox{is balanced}\\
\Leftrightarrow \forall \alpha \in G^*,\ \forall \rho' = \rho_{W}' \in \widetilde{H}^*,\\
 \quad \quad \quad \displaystyle \sum_{x \in G}(\rho' \circ d_{\alpha}f)(x) = 0_{{\mathit{End}}(W)}\\
\mbox{(according to lemma \ref{lemmaGAb_HnAb})}\\
\Leftrightarrow \forall \alpha \in G^*,\ \forall \rho' \in \widetilde{H}^*,\ \forall (i,j) \in \{1,\ldots,\dim_{{\mathbb{C}}}(W)\}^2,\\
 \quad \quad \quad\displaystyle \sum_{x \in G}(\rho_{ij}' \circ d_{\alpha}f)(x) = 0\\
\Leftrightarrow \forall \alpha \in G^*,\ \forall \rho' \in \widetilde{H}^*,\ \forall (i,j) \in \{1,\ldots,\dim_{{\mathbb{C}}}(W)\}^2,\\
 \quad \quad \quad{\mathit{AC}}_{f,\rho',i,j}(\alpha) = 0\\
\Leftrightarrow\forall \rho_V \in \widetilde{G},\ \forall \rho' \in \widetilde{H}^*,\ \forall (i,j) \in \{1,\ldots,\dim_{{\mathbb{C}}}(W)\}^2,\\
 \quad \quad \quad\widetilde{{\mathit{AC}}_{f,\rho',i,j}}(\rho_V) = {\mathit{AC}}_{f,\rho',i,j}(e_G){\mathit{Id}}_V\\
\mbox{(by lemma \ref{lemma_diese}).}
\end{array}
\end{equation}
But ${\mathit{AC}}_{f,\rho',i,j}(e_G) = \displaystyle \sum_{x \in G}(\rho_{ij}' \circ d_{e_G}f)(x) = \sum_{x \in G}\rho_{ij}'(e_H)$. Since we know that $\rho'(e_H) = {\mathit{Id}}_W$, then $\rho'(e_H)$ is written in the orthonormal basis $B$ of $W$ as the identity matrix and then $\forall (i,j) \in \{1,\ldots,\dim_{{\mathbb{C}}}(W)\}^2$,
\begin{equation}
\rho_{ij}'(e_H) = \left \{
\begin{array}{l l}
1 & \mbox{if}\ i=j,\\
0 & \mbox{if}\ i \not = j.
\end{array}
\right .
\end{equation}
From this last result, the equality (\ref{egalite3}) and the equivalence (\ref{egalite4}), it follows the expected result. \qed
\end{proof}
This case where both groups $G$ and $H$ are non Abelian involves some kind of tensor (or at least of block-matrix) notion of bentness. This is essentially due to the lack of commutativity of both groups.

\section{Summary}

The different notions of bentness, depending on the fact that the finite groups $G$ and $H$ are Abelian or not, are summarized below.\\
A function $f: G \rightarrow H$ is {\it bent} (or equivalently perfect nonlinear) if and only if
\begin{enumerate}
\item If $G$ and $H$ are Abelian (\cite{CD04,Pot04}):  $\forall (\alpha,\beta) \in G \times H^*$,
\begin{equation*}
|\widehat{(\chi^{\beta}_G \circ f)}(\alpha)|^2 = |G|.
\end{equation*}
\item If $G$ is non Abelian and $H$ is Abelian: $\forall (\rho,\beta) \in \tilde{G} \times H^*$ (with $\rho: G \rightarrow {\mathbb{U}}(V)$),
\begin{equation*}
(\widetilde{(\chi^{\beta}_H \circ f)}(\rho)) \circ (\widetilde{(\chi^{\beta}_H \circ f)}(\rho))^* = |G| {\mathit{Id}}_V.
\end{equation*}
\item If $G$ is Abelian and $H$ is non Abelian: $\forall (\alpha,\rho') \in G \times \widetilde{H}^*$ (with $\rho': H \rightarrow {\mathbb{U}}(W)$):
\begin{equation*}
(\widehat{(\rho' \circ f)}^{\mathit{MD}}(\alpha)) \circ (\widehat{(\rho' \circ f)}^{\mathit{MD}}(\alpha))^* = |G| {\mathit{Id}}_W.
\end{equation*} 
\item If $G$ and $H$ are both non Abelian groups: $\forall (\rho,\rho',(i,j)) \in \widetilde{G} \times \widetilde{H}^* \times \{1,\ldots,\dim_{{\mathbb{C}}}(W)\}^2$ (with $\rho: G \rightarrow {\mathbb{U}}(V)$ and $\rho': H \rightarrow {\mathbb{U}}(W)$),
\begin{equation*}
\displaystyle \sum_{k = 1}^{\dim_{{\mathbb{C}}}(W)}  (\widetilde{(\rho_{ik}' \circ f)}(\rho)) \circ (\widetilde{(\rho_{jk}' \circ f)}(\rho))^* = \left \{
\begin{array}{cc}
|G|{\mathit{Id}}_V & \mbox{if}\ i=j,\\
0_{{\mathit{End}}(V)} & \mbox{if}\ i\not=j.
\end{array}\right .
\end{equation*}
\end{enumerate}


\begin{thebibliography}{[MT1]}

\bibitem [1]{Biha1}
E. Biham and A. Shamir, ``Differential cryptanalysis of DES-like cryptosystems'', {\it Journal of Cryptology}, vol. 4, no. 1, pp. 3-72, 1991

\bibitem [2]{CD04}
C. Carlet and C. Ding, ``Highly nonlinear mappings'', {\it Journal of Complexity}, vol. 20, no. 2, pp. 205-244, 2004

\bibitem [3]{Dill}
J. F. Dillon, ``Elementary Hadamard difference sets'', PhD Thesis, University of Maryland, 1974

\bibitem [4]{Mat94}
M. Matsui, ``Linear Cryptanalysis Method for DES Cipher'', {\it Advances in Cryptology - Eurocrypt '93},  ser. Lecture Notes in Computer Science, vol. 765, pp. 386-397, 1994

\bibitem [5] {Nyb92}
K. Nyberg, ``Perfect nonlinear S-boxes'', {\it Advances in Cryptology - Eurocrypt'91}, ser. Lecture Notes in Computer Science, vol. 547, pp. 378-386, 1992 

\bibitem [6]{pey04}
G. Peyr\'e, ``L'alg\`ebre discr\`ete de la transform\'ee de Fourier'', {\it Collection Math\'ematiques \`a l'Universit\'e}, Ellipses, 2004

\bibitem [7]{Poi05}
L. Poinsot, ``Multidimensional bent functions'', to appear in {\it GESTS International Transactions in Computer Science and Engineering}, vol. 18, no. 1, pp. 185-195, 2005

\bibitem [8]{Pot04}
A. Pott, ``Nonlinear functions in Abelian groups and relative difference sets'', {\it Discrete Applied Mathematics}, vol. 138, issue 1-2, pp. 177-193, 2004

\bibitem [9]{Rot76}
O. S. Rothaus, ``On bent functions'', {\it Journal of Combinatorial Theory A}, vol. 20, pp. 300-365, 1976


\end{thebibliography}
\end{document}